\documentclass[reqno, 11pt]{amsart}

\usepackage{amsmath,amssymb,amsthm,amsfonts}
\usepackage{mathtools} %auxiliary
\usepackage[normalem]{ulem}
\usepackage{mathrsfs}
\usepackage{hyperref}

% Theorems definitions etc.
\newtheorem{theorem}{Theorem}[section]
\newtheorem{lemma}[theorem]{Lemma}
\newtheorem{prop}[theorem]{Proposition}
\newtheorem{cor}[theorem]{Corollary}

\theoremstyle{remark}
\newtheorem*{remark}{Remark}

% new commands: general
\newcommand{\e}{\mathrm{e}} % exponential e
\newcommand{\N}{\mathbb{N}}
\newcommand{\R}{\mathbb{R}}
\newcommand{\C}{\mathbb{C}}
\newcommand{\X}{\mathbb X}
 %integration d

\newcommand{\be}{\begin{equation}}
\newcommand{\ee}{\end{equation}}

\def\1{{\mathchoice {1\mskip-4mu\mathrm l}      % Blackboard bold 1
{1\mskip-4mu\mathrm l}
{1\mskip-4.5mu\mathrm l} {1\mskip-5mu\mathrm l}}}

\begin{document}

\title{A Multi-Body Dobrushin--Sokal Criterion -- Part I} 
\date{August 16, 2025}

\author[J.~P.~Neumann]{Jan Philipp Neumann}
\address{Mathematisches Institut, Ludwig-Maximilians-Universit{\"a}t, 80333 M{\"u}nchen, Germany.}
\email{neumann@math.lmu.de}

\begin{abstract}
	We derive a sufficient condition for zero-freeness of partition functions applicable to lattice gases with possibly complex-valued multi-body interactions. This includes the case of hard-core interactions and, in particular, generalises recent results by Galvin et al.\ (2024) and Bencs--Buys (2025) on zero-free polydiscs of hypergraph independence polynomials. We provide two proofs: the first generalises the inductive approach of Bencs and Buys; the second employs the Kirkwood--Salsburg hierarchy. Notably, the central argument of the second proof uses of a certain partition scheme for coverings and, as a by-product, we obtain a direct improvement of Gallavotti and Miracle-Sol{\'e}'s (1968) bounds for the Kirkwood--Salsburg operator.
	
	\bigskip 
		
    \noindent \emph{Mathematics Subject Classification}: 05C31, 05C65, 05C69, 82B20.
    
	\medskip     
    
    \noindent \emph{Keywords}: Classical lattice gas; multi-body interactions; grand partition function; effective interactions; hypergraph independence polynomial; Kirkwood--Salsburg hierarchy.
\end{abstract}

\maketitle

\section{Introduction} \label{sec:introduction}

\noindent Recently, Bencs and Buys proved in \cite{ref:bencs-buys} that, given a finite hypergraph of maximal degree $\Delta \geq 2$, its independence polynomial has no zeroes in the activity polydisc with uniform radius $\frac{(\Delta-1)^{\Delta-1}}{\Delta^\Delta}$. Absent any information beyond $\Delta$, this radius is optimal and improves upon the lower bound $\frac{\Delta^\Delta}{(\Delta+1)^{\Delta+1}}$ given by Galvin et.\ al.\ in \cite{ref:galvin-et-al} so shortly before that the latter already acknowledges the former authors' improvement, then still in preprint. For standard graphs, these radii are well-known, the optimal one going back to Shearer's 1985 paper motivated by the Lov{\'a}sz local lemma, see, e.g., \cite{ref:scott-sokal}.

Beyond motivations concerning combinatorics and algorithmic efficiency, all of the articles \cite{ref:bencs-buys, ref:galvin-et-al, ref:scott-sokal} mention, to a more or less significant extent, the relevance from the view of equilibrium statistical mechanics. The connection lies in the interpretation of the independence polynomial as the partition function of a classical lattice gas with hard-core self-repulsion, whose zero-free activity domains are related to parameter regimes without phase transitions according to Lee and Yang, see \cite{ref:lee-yang-i, ref:lee-yang-ii}. While the independence polynomial only covers lattice gases with pure (multi-body) hard-core repulsions, the Lee--Yang relation extends to situations with arbitrary interactions and the present article carries the argument of Bencs and Buys through this extension.

The result, Theorem~\ref{thm:dobrushin-Z-non-zero-finite}, gives a new sufficient criterion for zero-freeness of general lattice gas partition functions at sufficiently small activity. Not only is this criterion, to the best of our knowledge, novel in its range of admissible lattice gas interactions but, by providing two proofs, we can unify the recent progress on the combinatorics side with improvements to long-established approaches on the statistical physics side, extending the comparison in \cite{ref:scott-sokal} to possibly attractive or even complex multi-body interactions. In fact, for pairwise repulsions, Theorem~\ref{thm:dobrushin-Z-non-zero-finite} becomes a criterion found in the former reference, where credit is given to Dobrushin (\cite{ref:dobrushin-semi-invariants, ref:dobrushin-saint-flour}) and Sokal (\cite{ref:sokal}).

The first proof spans Section~\ref{sec:proof} and is the aforementioned generalisation of some of Bencs and Buys' combinatorics in \cite{ref:bencs-buys} to complex-valued interactions. The subsequent Section~\ref{sec:independence-polynomial} then connects our streamlined notation back to the setting of hard-core interactions and independence polynomials. The result of Galvin et.\ al., \cite[Theorem~1]{ref:galvin-et-al}, see Corollary~\ref{cor:galvin-et-al} here, is an immediate consequence of our main result. Regarding Bencs and Buys' result, \cite[Theorem~1.1]{ref:bencs-buys}, here Theorem~\ref{thm:bencs-buys}, we slightly modify the proof since a few aspects of their original proof are omitted in our generalisation. This modified proof closely follows the treatment in \cite{ref:scott-sokal} for standard graphs.

In Section~\ref{sec:alternative}, we provide a second proof of our main result based on the classical Kirkwood--Salsburg hierarchy, cf., e.g., Ruelle's book \cite{ref:ruelle}, with a non-standard twist. The more intricate parts of Bencs and Buys' construction, or our generalisation thereof, become obsolete in this alternative approach but we refrain from proving \cite[Theorem~1.1]{ref:bencs-buys} a second time here. We essentially use the ansatz described for Gibbs point processes in \cite{ref:jansen-cluster, ref:jansen-kolesnikov} by Jansen and Kolesnikov. This avoids the need for the strict contractivity of the Kirkwood--Salsburg operator in the classical Banach fixed point argument as employed in \cite{ref:gallavotti-miracle-sole, ref:gallavotti-miracle-sole-robinson, ref:ruelle}. The heart of our argument is a partition scheme that sorts coverings of finite sets according to in a certain sense minimal sub-coverings. Reading this argument as a bound on the Kirkwood--Salsburg operator directly improves the bound given by Gallavotti and Miracle-Sol{\'e} in \cite{ref:gallavotti-miracle-sole} which forms the basis of the analyticity results there and in \cite{ref:gallavotti-miracle-sole-robinson, ref:ruelle}.

Although we do not appeal to it here, the ansatz taken from \cite{ref:jansen-cluster, ref:jansen-kolesnikov} implies the absolute convergence of the Mayer cluster expansions for log-partition functions and (infinite-volume) correlations. To the interested reader, let us, for now, at least mention the following: the case of pairwise interactions has been investigated quite extensively, see, e.g., \cite{ref:scott-sokal, ref:ueltschi, ref:poghosyan-ueltschi, ref:jansen-cluster, ref:jansen-kolesnikov} and the references therein; the case of multi-body interactions is combinatorially harder and therefore comparatively underdeveloped but one may have a look at the references listed in the introduction of \cite{ref:galvin-et-al} or polymer expansion approaches such as \cite{ref:procacci-scoppola, ref:nguyen-fernandez}.

Regarding the organisation of the article, Section~\ref{sec:results} introduces the basic model-defining quantities, states our main result, Theorem~\ref{thm:dobrushin-Z-non-zero-finite}, and closes with a short discussion of its relation to some well-established comparable results from the literature. This is followed by a brief but essential extension of our notation in Section~\ref{sec:preliminaries}. The rest is as outlined above: the first proof of our main result spans Section~\ref{sec:proof}; Section~\ref{sec:independence-polynomial} then returns to the special case of independence polynomials to illustrate the results from \cite{ref:galvin-et-al, ref:bencs-buys}; finally, the alternative proof, based on the Kirkwood--Salsburg hierarchy, is featured in Section~\ref{sec:alternative}.

\section{Results} \label{sec:results}

\noindent Throughout the paper, we fix a \emph{lattice} $\X$, a finite or countably infinite set whose elements are called \emph{sites}. In the context of this article, a \emph{configuration} assigns to each site $x \in \X$ one of two states, \emph{unoccupied} or \emph{occupied}, and is formally regarded as an element of
\[
	\mathbf F : = \{X \Subset \X\} : = \{X \subset \X \mid |X| < \infty\},
\]
the set of finite subsets of $\X$. Infinite configurations are irrelevant within the scope of this article.

The propensity of individual sites to be occupied is expressed by a function $z : \X \to \C$, called \emph{activity} (or \emph{fugacity}), and we assign to each configuration $X \Subset \X$ the multiplicative a priori weight
\[
	z^X : = \prod_{x \in X} z(x),
\]
where we treat products over empty index sets as equal to $1$. Of course, the words ``propensity'' and ``weight'' are somewhat inadequate unless $z \geq 0$ or, say, $z = \exp(\beta \mu)$ for some \emph{chemical potential} $\mu : \X \to \R$ and an \emph{inverse temperature} $\beta \in (0, \infty)$. However, the extension to complex values is quite customary, as is treating $\{z(x) \mid x \in \X\}$ as a set of (commuting) abstract algebraic variables and each $z^X$ with $X \Subset \X$ as a multivariate monomial.

Furthermore, we fix an \emph{interaction}, by which we simply mean a function
\[
	W : \mathbf F \to \C,
\]
typically of the form $W = \exp(- \beta V)$ for some \emph{potential} $V : \mathbf F \to \R \cup \{+\infty\}$, where $\exp(-\infty) = 0$ by convention and $\beta$ is as before. Our use of $W$ follows Scott and Sokal (\cite{ref:scott-sokal}). We then define the \emph{Boltzmann factor}
\[
	\kappa : \mathbf F \to \C, \quad X \mapsto \prod_{S \subset X : S \neq \varnothing} W(S).
\]
Note the exclusion of the empty subconfiguration $\varnothing$ in the above product, normalising the Boltzmann factor at $\varnothing$ in the sense that $\kappa(\varnothing) = 1$. When $W$ has the aforementioned ``typical form'', $\kappa = \exp(- \beta H)$ where the \emph{Hamiltonian} $H : \mathbf F \to \R \cup \{+\infty\}$ is given by $X \mapsto \sum_{S \subset X : S \neq \varnothing} V(S)$.

\begin{remark}
	As is about to become apparent, the activity $z(x)$ of any given site $x \in \X$ and the corresponding singleton part $W(x) : = W(\{x\})$ of the interaction play equivalent roles in the sense that any of the two can be multiplicatively absorbed into the other without any meaningful impact on the model. Our definition may look clumsy but we think it justified by the following: (a) in the common view of partition functions as polynomials (or power series), $z$ corresponds to the indeterminates while the coefficients are based on $W = \e^{- \beta V}$; (b) for independence polynomials, $W$ is canonically $\{0, 1\}$-valued whereas $z$, of course, faces no such restriction; (c) $z$ may correspond to a natural a priori measure on $\X$, especially in continuum models, leaving $W$ to express local, possibly inhomogeneous modifications, be it model-specific or induced by boundary conditions.
\end{remark}

\subsection{Partition functions}

Let $\Lambda \Subset \X$ be a finite reference volume. The grand \emph{partition function} over $\Lambda$ is the quantity
\[
	Z(\Lambda) : = \sum_{X \subset \Lambda} z^X \kappa(X) \in \C.
\]
Note that the empty configuration always contributes the summand $\kappa(\varnothing) = Z(\varnothing) = 1$. Now let $x \in \X \setminus \Lambda$ be some additional site. Since $x$ can either be unoccupied or occupied, and only contributes to the configuration weight in the latter case, we have the \emph{fundamental identity}
\be \label{eq:fundamental-identity-Z+Z}
	Z(\{x\} \cup \Lambda) = Z(\Lambda) + Z(x, \Lambda),
\ee
where the second summand is the \emph{pinned partition function}
\[
	Z(x, \Lambda) : = \sum_{Y \subset \Lambda} z^{\{x\} \cup Y} \kappa(\{x\} \cup Y).
\]
If $Z(\Lambda) \neq 0$, the \emph{effective activity} (see remark below)
\[
	\widehat z (x, \Lambda) : = \frac{Z(x, \Lambda)}{Z(\Lambda)}
\]
of $x$ relative to $\Lambda$ is well-defined (note $\widehat z(x, \varnothing) = z(x) W(x)$) and we may rewrite \eqref{eq:fundamental-identity-Z+Z} as
\be \label{eq:fundamental-identity-Z(1+zhat)}
	Z(\{x\} \cup \Lambda) = Z(\Lambda)(1 + \widehat z(x, \Lambda)).
\ee
At this point, $Z(\{x\} \cup \Lambda) \neq 0$ is equivalent to $\widehat z(x, \Lambda) \neq -1$. This is the basis of the upcoming inductive argument.

\begin{remark}
	We adopt the terms ``fundamental identity'' and ``effective activity'' from \cite{ref:scott-sokal}. While only dealing with standard repulsive pair interactions, Section~3 of the latter reference contains these concepts as \cite[Eqs.~(3.3) and (3.45)]{ref:scott-sokal}, respectively. In fact, it is also instructive to look at \cite[Eqs.~(3.10) through (3.16)]{ref:scott-sokal}. Including our Eq.~\eqref{eq:fundamental-identity-Z(1+zhat)} as the second last, these identities encapsulate the specialisation of our later arguments to pair interactions. The length of Section~\ref{sec:proof} is mostly due to complications caused by the generalisation to multi-body interactions. See also \cite{ref:jansen-hierarchical, ref:jansen-neumann} for extensive use of effective activities in very specific lattice gases with hard-core pair interactions.
\end{remark}

\subsection{Dobrushin--Sokal criterion} \label{subsec:results-pf-non-zero}

A relatively simple approach to a criterion guaranteeing $Z(\Lambda) \neq 0$ for all finite reference volumes $\Lambda \Subset \X$ consists in bounding the effective activities, uniformly in their volume argument, by some fixed function $r : \X \to [0, 1)$. We also set $\alpha : = \frac{r}{1-r} : \X \to \R_+$ and trust the reader to be comfortable with shorthand like $\alpha^X = \prod_{x \in X} \alpha(x)$ for $X \Subset \X$, mirroring our notation of activity ``monomials''.

\begin{theorem} \label{thm:dobrushin-Z-non-zero-finite}
	Suppose that, for all $x \in \X$,
	\be \label{eq:dobrushin-criterion}
		|z(x)| \prod_{\substack{X \Subset \X :  \\ x \in X}} \max\{|W(X)|, 1 + |W(X) - 1| \alpha^S \mid \varnothing \neq S \subset X \setminus \{x\}\} \leq r(x).
	\ee
	Then $\sup_{\Lambda \Subset \X \setminus \{x\}} |\widehat z(x, \Lambda)| \leq r(x) < 1$ for all $x \in \X$. Hence, for all $\Lambda \Subset \X$,
	\[
		0 < (1-r)^\Lambda \leq |Z(\Lambda)| \leq (1+r)^\Lambda.
	\]
\end{theorem}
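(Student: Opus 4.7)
My plan is strong induction on $|\Lambda|$, establishing first the uniform effective-activity bound $|\widehat z(x, \Lambda)| \leq r(x) < 1$ for every $x \in \X$ and every $\Lambda \Subset \X \setminus \{x\}$. The two-sided estimate on $|Z(\Lambda)|$ then follows by telescoping: picking any $x \in \Lambda$, the fundamental identity \eqref{eq:fundamental-identity-Z(1+zhat)} gives $|Z(\Lambda)| = |Z(\Lambda \setminus \{x\})| \cdot |1 + \widehat z(x, \Lambda \setminus \{x\})|$ with $1 - r(x) \leq |1 + \widehat z(x, \Lambda \setminus \{x\})| \leq 1 + r(x)$, and iterating site by site produces both inequalities (and, via the strict positivity of the lower bound, also $Z(\Lambda) \neq 0$, so the effective activities used en route are genuinely well-defined).

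The base case $\Lambda = \varnothing$ is clean: one has $\widehat z(x, \varnothing) = z(x) W(\{x\})$, and every factor in the product on the left of \eqref{eq:dobrushin-criterion} is at least $1$ (the alternative entry $1 + |W(X) - 1|\alpha^S$ is visibly $\geq 1$, so the max cannot drop below $1$ even when $|W(X)| < 1$). Keeping only the $X = \{x\}$ factor---for which $X \setminus \{x\} = \varnothing$ admits no nonempty $S$ and the max reduces to $|W(\{x\})|$---yields $|z(x) W(\{x\})| \leq r(x)$.

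The heart of the proof is the inductive step. I would start from
\[
    \widehat z(x, \Lambda) = z(x) W(\{x\}) \cdot \frac{\sum_{Y \subset \Lambda} z^Y \kappa(Y) \prod_{\varnothing \neq T \subset Y} W(\{x\} \cup T)}{\sum_{Y \subset \Lambda} z^Y \kappa(Y)}
\]
and, in the spirit of \cite{ref:bencs-buys}, expand the inner product so that each multi-body interaction $X = \{x\} \cup T \ni x$ is charged a single factor matching its contribution in \eqref{eq:dobrushin-criterion}. The two entries of the max correspond to two estimation regimes for a given $X$: one may leave $W(X)$ as is inside the weighted average, producing the worst-case factor $|W(X)|$; alternatively, one writes $W(X) = 1 + (W(X) - 1)$ and pairs the residual with the ``cost'' of enforcing some nonempty $S \subset X \setminus \{x\}$ inside the Boltzmann-weighted average, producing $1 + |W(X) - 1|\alpha^S$. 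The $\alpha^S$ arises by iteratively pinning the sites of $S$ through the fundamental identity and bounding each resulting ratio $|\widehat z(y, \cdot)/(1 + \widehat z(y, \cdot))|$ by $r(y)/(1-r(y)) = \alpha(y)$ via the induction hypothesis.

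The main obstacle is the combinatorial bookkeeping that keeps this expansion factorised interaction by interaction. Multi-body interactions share sites in non-trivial patterns, so one has to enumerate the ``enforcement'' sets $S$ without charging any site twice, and in a way that each intermediate pinning sub-model inherits enough of \eqref{eq:dobrushin-criterion} for the induction hypothesis to be applicable; the option $S \subsetneq X \setminus \{x\}$ is exactly what lets overlapping interactions share the burden. Arranging this enumeration cleanly---extending the combinatorics of Bencs and Buys from the hard-core setting to complex-valued multi-body $W$---is what I would expect Section~\ref{sec:proof} to devote most of its length to.
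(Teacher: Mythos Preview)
Your overall strategy---induction on $|\Lambda|$, with the two-sided partition-function bounds following from the effective-activity bound via the fundamental identity \eqref{eq:fundamental-identity-Z(1+zhat)}---matches the paper's first proof (Section~\ref{sec:proof}), and you correctly locate the hard part in factorising $\widehat z(x,\Lambda)$ as a product indexed by the interactions $X \ni x$. What you do not anticipate is the specific device that makes this factorisation work: the paper introduces \emph{interpolating interactions} $W_X$, one for each $X \in \mathbf F(x)$, in which only the single interaction $W(X)$ links $x$ to the rest of the lattice while earlier (with respect to a fixed order) interactions have been absorbed by conditioning and later ones erased. The ratio $Z(\Lambda \mid x)/Z(\Lambda)$ then telescopes into a product of one-interaction ratios $Z_X(\Lambda \mid x)/Z_X(\Lambda)$ (Lemma~\ref{lem:step-1-interpolation}), each of which equals $1 + (W(X)-1)R_X(X',\Lambda)$ (Lemma~\ref{lem:step-2-removal}) and is then factorised site-by-site into effective activities for strictly smaller volumes (Proposition~\ref{prop:step-3-factorisation}).

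Your reading of the max in \eqref{eq:dobrushin-criterion} is also slightly off. At the top level of the recursion, every non-singleton factor is bounded by $1 + |W(X)-1|\alpha^{X \setminus \{x\}}$---always the full complement, never a proper subset, and never the bare $|W(X)|$. The other entries of the max are not top-level estimation choices; they are present so that the criterion \emph{survives} the modifications the interaction undergoes when one recurses: passing from $W$ to $W_X$ and conditioning on the already-processed sites of $X'$. The $|W(X)|$ entry is genuinely needed in Lemma~\ref{lem:criterion-conditional-stability}, where conditioning merges several interactions $W(Y \cup C)$, $C \subset B$, into one $W(Y \mid B)$; a Mayer-trick expansion of that product leaves some factors as plain $|W(\cdot)|$ alongside one factor of the form $|W(\cdot)-1|\alpha^S$, and only the max accommodates both. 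So your intuition that smaller $S$ let overlapping interactions ``share the burden'' is right in spirit, but the mechanism is stability of the criterion under conditioning rather than a direct per-interaction allocation at the top level.
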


The factor for $X = \{x\}$ in \eqref{eq:dobrushin-criterion} is just $|W(x)|$, which can also be absorbed into $z(x)$ (see earlier remark). In all other factors, the initial term $|W(X)|$ in the maximum may seem irritating at first. Of course, it can be omitted in the \emph{repulsive} case where $|W(X)| \leq 1$. In particular, if $W$ is a standard real repulsive pair interaction in that $W(X) = W(X) \1_{\{|X| = 2\}} + \1_{\{|X| \neq 2\}} \in [0, 1]$ for all $X \Subset \X$, the inequality \eqref{eq:dobrushin-criterion} reduces to
\[
	|z(x)| \prod_{x' \in \X \setminus \{x\}} \frac{1 - W(\{x, x'\})r(x')}{1 - r(x')} \leq r(x),
\]
which is just condition (5.7) in \cite[Corollary~5.2]{ref:scott-sokal}, attributed there to Dobrushin and Sokal, respectively for hard-core (\cite{ref:dobrushin-semi-invariants, ref:dobrushin-saint-flour}) and general standard pair repulsions (\cite{ref:sokal}).

Conceptually, the insertion of $|W(X)|$ into the maximum mirrors generalisations of similar criteria from repulsive to \emph{stable} interactions. More precisely, \eqref{eq:dobrushin-criterion} implies the local stability condition
\[
	|z(x) W(x)| \prod_{X \Subset \X : \{x\} \subsetneq X} \max\{|W(X)|, 1\} \leq r(x) < 1,
\]
where the maxima inside the product are typically expressed via
\[
	\max\{\exp(-\beta V), 1\} = \exp(\max\{-\beta V, 0\}) \leq \exp(|\beta V|)
\]
for $W = \exp(-\beta V)$ with $\beta V \in \R \cup \{+\infty\}$. In particular, the hypothesis of Theorem~\ref{thm:dobrushin-Z-non-zero-finite} implies the bound
\[
	|z^{\{x\} \cup Y} \kappa(\{x\} \cup Y)| \leq r(x) |z^Y \kappa(Y)|
\]
for all $x \in \X$ and $Y \Subset \X \setminus \{x\}$, which also inductively yields the summand-wise second inequality in
\[
	|Z(\Lambda)| \leq \sum_{X \subset \Lambda} |z^X \kappa(X)| \leq \sum_{X \subset \Lambda} r^X = (1 + r)^\Lambda
\]
for all $\Lambda \Subset \X$.

\subsection{Further derived and comparable criteria}

Bringing the denominator of $r = \frac{\alpha}{1+\alpha}$ from the right-hand side of \eqref{eq:dobrushin-criterion} to the left and using the trivial estimate $1 + \zeta \leq \e^\zeta$ for $\zeta \geq 0$, we can derive a criterion requiring
\begin{align}
	& |z(x) W(x)| \prod_{X \Subset \X : \{x\} \subsetneq X} \max\{|W(X)|, 1\} \notag \\
	& \quad \times \exp \left( \alpha(x) + \sum_{X \Subset \X : \{x\} \subsetneq X} |W(X) - 1| \max_{S \subset X \setminus \{x\} : S \neq \varnothing} \{\alpha^S\} \right) \leq \alpha(x) \label{eq:kotecky-preiss-like-criterion}
\end{align}
for all $x \in \X$. If $W$ is a pair interaction, i.e., $W(X) = 1$ for all $X \Subset \X$ with $|X| \neq 2$, then \eqref{eq:kotecky-preiss-like-criterion} simplifies to
\begin{align*}
	& |z(x)| \prod_{x' \in \X \setminus \{x\}} \max\{|W(\{x, x'\})|, 1\} \\
	& \quad \times \exp \left( \alpha(x) + \sum_{x' \in \X \setminus \{x\}} |W(\{x, x'\}) - 1| \alpha(x') \right) \leq \alpha(x),
\end{align*}
which is a variant of the criterion of Koteck{\'y} and Preiss. Their original condition, see \cite{ref:kotecky-preiss}, deals only with lattice gases with pairwise hard-core interactions. The above variant is, up to the different approach to stability, an instance of the extended version in \cite{ref:poghosyan-ueltschi}, allowing for stable pairwise interactions on general position spaces without the need for hard-core self-repulsion, see also \cite[Corollary~2.2]{ref:jansen-kolesnikov}. To see the comparison, choose $\alpha = |z| e^a$ and note that the initial summand inside the above exponential corresponds to the hard-core self-repulsion that is implicit in our present framework. For pairwise repulsions, the stability considerations are obsolete and it suffices to consult \cite{ref:ueltschi}.

There is also a well-established way of circumventing the need to find a suitable $\alpha$ for the Koteck{\'y}--Preiss criterion that we can adapt to \eqref{eq:kotecky-preiss-like-criterion}. Assume $\alpha$ constant and, if $W(X) \neq 1$ for some $X \Subset \X$ with $|X| > 2$, then additionally assume $\alpha \leq 1$. The argument of the exponential in \eqref{eq:kotecky-preiss-like-criterion} then becomes $C_W(x) \alpha$ with
\[
	C_W(x) : = 1 + \sum_{X \Subset \X : \{x\} \subsetneq X} |W(X) - 1|,
\]
the assumed finiteness of which is not uncommon in the context of Gibbs point processes, cf. \cite[Def.~4.1.2]{ref:ruelle} or \cite[Theorem~B.1]{ref:jansen-cluster}. Since the term $\alpha \exp(- C_W(x) \alpha)$ assumes its maximal value at $\alpha = C_W(x)^{-1} \leq 1$, we can use requirement \eqref{eq:kotecky-preiss-like-criterion} to derive the following, where $1/\infty = \infty^{-1} = 0$.

\begin{cor} \label{cor:kotecky-preiss-without-alpha}
	Suppose that, for all $x \in \X$,
	\[
		|z(x) W(x)| \prod_{X \Subset \X : \{x\} \subsetneq X} \max\{|W(X)|, 1\} \leq \frac{1}{\overline{C}_W \e},
	\]
	where $\overline{C}_W : = \sup_{y \in \X} \{C_W(y)\}$. Then the hypotheses of Theorem~\ref{thm:dobrushin-Z-non-zero-finite} are satisfied with constant $\alpha = \overline{C}_W^{-1} \leq 1$ and accordingly chosen $r = \frac{\alpha}{1+\alpha} \leq 1/2$.
\end{cor}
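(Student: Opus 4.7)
The plan is to show that the Corollary's hypothesis implies criterion~\eqref{eq:kotecky-preiss-like-criterion} with the constant choice $\alpha \equiv 1/\overline{C}_W$, and to observe in passing that \eqref{eq:kotecky-preiss-like-criterion} is itself a sufficient condition for the hypothesis~\eqref{eq:dobrushin-criterion} of Theorem~\ref{thm:dobrushin-Z-non-zero-finite}. The latter implication is implicit in the discussion preceding the Corollary and would be made explicit by combining the elementary bound $\max\{|W(X)|,\, 1 + |W(X)-1|\alpha^S\} \leq \max\{|W(X)|, 1\}\, \e^{|W(X) - 1|\alpha^S}$ inside the product (a quick two-case check invoking $1 + \zeta \leq \e^\zeta$) with the estimate $r(x) = \alpha(x)/(1 + \alpha(x)) \geq \alpha(x)\,\e^{-\alpha(x)}$ on the right-hand side (the same exponential inequality once more).

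For the main step I would proceed as follows. With constant $\alpha \equiv 1/\overline{C}_W$ one has $\alpha^S = \alpha^{|S|} \leq \alpha$ for every nonempty $S$, using $\alpha \leq 1$, whence $\max_{S \subset X \setminus \{x\}: S \neq \varnothing} \alpha^S = \alpha$. The exponent in \eqref{eq:kotecky-preiss-like-criterion} then collapses to $\alpha\bigl(1 + \sum_{X : \{x\} \subsetneq X} |W(X) - 1|\bigr) = \alpha\, C_W(x)$, and \eqref{eq:kotecky-preiss-like-criterion} reduces to
\[
	|z(x) W(x)| \prod_{X \Subset \X : \{x\} \subsetneq X} \max\{|W(X)|, 1\} \leq \alpha\, \e^{-\alpha C_W(x)}.
\]
Replacing $C_W(x)$ by the uniform $\overline{C}_W$ in the exponent and then optimising $\alpha \mapsto \alpha\, \e^{-\alpha \overline{C}_W}$ on $(0, \infty)$, whose unique maximiser is $\alpha = 1/\overline{C}_W$ with value $1/(\overline{C}_W \e)$, produces exactly the Corollary's assumption as a uniform-in-$x$ sufficient condition.

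Two short consistency checks remain. First, since the defining sum of $C_W(y)$ is nonnegative, $C_W(y) \geq 1$ trivially, so $\overline{C}_W \geq 1$; this both validates the side constraint $\alpha = 1/\overline{C}_W \leq 1$ used above and yields $r = \alpha/(1+\alpha) = 1/(\overline{C}_W + 1) \leq 1/2$ as claimed. Second, the convention $1/\infty = 0$ makes the degenerate case $\overline{C}_W = \infty$ vacuous. I do not anticipate any genuine obstacle; the only care point is the clean bookkeeping of the chain (Corollary hypothesis) $\Rightarrow$ \eqref{eq:kotecky-preiss-like-criterion} $\Rightarrow$ \eqref{eq:dobrushin-criterion} together with the side constraint $\alpha \leq 1$ throughout.
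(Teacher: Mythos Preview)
Your proposal is correct and follows essentially the same route as the paper: derive \eqref{eq:kotecky-preiss-like-criterion} as a strengthening of \eqref{eq:dobrushin-criterion} via $1+\zeta\leq\e^\zeta$, simplify the exponent to $C_W(x)\alpha$ for constant $\alpha\leq 1$, and then check that the choice $\alpha=\overline{C}_W^{-1}$ makes the Corollary's bound suffice. Your explicit replacement of $C_W(x)$ by $\overline{C}_W$ before evaluating at $\alpha=\overline{C}_W^{-1}$ cleanly resolves the passage from the per-$x$ optimiser $C_W(x)^{-1}$ to a uniform choice, which the paper leaves slightly implicit.
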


To the best of our knowledge, for multi-body interactions, the corollary is new. For pair interactions, the condition is well-known apart from the uncommon way of expressing the local stability term, cf.\ \cite[Theorem~4.2.3]{ref:ruelle} for example.

For a comparison with an existing criterion for multi-body interactions, let $W$ be given by a pointwise \emph{absolutely summable potential} $V : \mathbf F \to \C$ in the sense that $W = \exp(-V)$ (for convenience and without loss, $\beta = 1$) and
\[
	D_V(x) : = \sum_{X \Subset \X : \{x\} \subsetneq X} |V(X)| < \infty
\]
for every $x \in \X$. Setting $\alpha = 1$ identically, hence $r = 1/2$, the trivial bound $1 + |\e^\zeta - 1| \leq \e^{|\zeta|}$ for arbitrary $\zeta \in \C$ makes \eqref{eq:dobrushin-criterion} hold whenever
\[
	2 |z(x) W(x)| \exp(D_V(x)) \leq 1.
\]
This condition improves the one by Gallavotti and Miracle-Sol{\'e} in \cite{ref:gallavotti-miracle-sole} where $\e^{D_V} - 1$ is used inside the exponential instead of the smaller $D_V$. This improvement extends to the slight refinement by the latter authors together with Robinson, see \cite{ref:gallavotti-miracle-sole-robinson} or \cite[Theorem~4.2.7]{ref:ruelle}. In fact, our alternative proof of Theorem~\ref{thm:dobrushin-Z-non-zero-finite} in Section~\ref{sec:alternative} shows that our improvement perfectly fits the context in \cite{ref:gallavotti-miracle-sole}, i.e., bounding the Kirkwood--Salsburg operator.

Another criterion for the non-vanishing of lattice gas partition functions with multi-body interactions can be found in \cite[Section~7]{ref:procacci-scoppola}. It is possible to improve the use of $D_V$ there, much like with the Gallavotti--Miracle-Sol{\'e}(--Robinson) criterion. However, we neither address the criteria in \cite{ref:procacci-scoppola} nor the more recent ones in \cite{ref:nguyen-fernandez} any further at this point. Both rely on the polymer expansion, which is intimately related to the cluster expansion and therefore not discussed in this paper.

\section{Preliminaries} \label{sec:preliminaries}

\noindent Let us start by updating our notational framework.

\subsection{Conditioning}

We first extend the interaction to a function
\[
	W : \mathbf F \times \mathbf F \to \C
\]
by the prescription
\be \label{eq:W-conditional-def}
	(X, B) \mapsto W(X \mid B) : = \begin{cases}
		\prod_{C \subset B} W(X \cup C) & \text{if } X \cap B = \varnothing, \\
		0 & \text{if } X = \{x\}, x \in B, \\
		1 & \text{otherwise}.
	\end{cases}
\ee
Note that the second part of this case distinction makes the corresponding extension
\[
	\kappa : \mathbf F \times \mathbf F \to \C, \quad (X, B) \mapsto \kappa(X \mid B) : = \prod_{S \subset X : S \neq \varnothing} W(S \mid B),
\]
of the Boltzmann factor vanish for non-disjoint arguments. Although this makes the third case in \eqref{eq:W-conditional-def} mostly irrelevant, our choice slightly simplifies some upcoming arguments. Clearly, our initial definitions of $W$ and $\kappa$ are recovered by fixing $B = \varnothing$, in which case we sometimes suppress the new second argument, whose conceptual role as a prior \emph{boundary condition} starts with the following property of $\kappa$: For all $X, Y, B \Subset \X$ with $X \cap Y = \varnothing$,
\be \label{eq:kappa-conditionally-multiplicative}
	\kappa(X \cup Y \mid B) = \kappa(X \mid Y \cup B) \kappa(Y \mid B).
\ee
This conditional multiplicativity of $\kappa$ is easily verifiable from the definitions above and reveals, e.g., by setting $B = \varnothing$ here, that our bivariate extension describes a relative Boltzmann factor comprised only of weights beyond the internal interactions of a fixed boundary condition.

\subsection{Partition functions revisited}

The introduction of boundary conditions obviously extends to partition functions. Given $X, B \Subset \X$ and $\Lambda \Subset \X$, we define
\[
	Z(X, \Lambda \mid B) : = \sum_{Y \subset \Lambda \setminus X} z^{X \cup Y} \kappa(X \cup Y \mid B).
\]
As with the case $B = \varnothing$, we sometimes suppress the first argument if $X = \varnothing$. Note that our extension is consistent with our initial definitions. The only discrepancy is the slight notational abuse of dropping curly set braces for singleton (sub-)configurations.

Two observations are in order. Joining the explicit excision of the pinned subconfiguration $X$ with the implicit excision of the boundary condition $B$ and employing \eqref{eq:kappa-conditionally-multiplicative}, we have
\be \label{eq:Z-pinned-boundary-conversion}
	Z(X, \Lambda \mid B) = Z(X, \Lambda \setminus (X \cup B) \mid B) = z^X \kappa(X \mid B) Z(\Lambda \mid X \cup B).
\ee
Furthermore, the fundamental identity \eqref{eq:fundamental-identity-Z+Z} generalises to
\be \label{eq:fundamental-identity-general}
	Z(\Lambda_1 \cup \Lambda_2 \mid B) = \sum_{X \subset \Lambda_1} Z(X, \Lambda_2 \mid B)
\ee
whenever $\Lambda_1, \Lambda_2 \Subset \X$ are two reference volumes, preferably disjoint.

\begin{remark}
	One often defines conditional partition functions in such a way that the points shared by reference volume and boundary condition are removed from the latter rather than the former. Hence, our definition makes the term ``boundary condition'' a little misleading. For our convenience, we change neither argument and avoid writing out cumbersome terms like the middle one in \eqref{eq:Z-pinned-boundary-conversion}.
\end{remark}

\subsection{Correlations and effective activities}

Deviating slightly from the standard definition (see the remark below), if $Z(\Lambda \mid B) \neq 0$, we call the ratio
\[
	R(X, \Lambda \mid B) : = \frac{Z(X, \Lambda \mid B)}{Z(\Lambda \mid B)}
\]
the \emph{correlation} of $X \Subset \X$ over $\Lambda \Subset \X$ relative to $B \Subset \X$. While excising $B$ from $\Lambda$ still changes nothing, the relation of $X$ to $\Lambda$ is now relevant due to the unpinned partition function in the denominator. If, for example, $X$ is a singleton with unique element $x \in \X \setminus \Lambda$, then, on the one hand, we define
\[
	\widehat z(x, \Lambda \mid B) : = R(x, \Lambda \mid B),
\]
$\widehat z(x, \Lambda) = \widehat z(x, \Lambda \mid \varnothing)$ clearly being consistent with our earlier definition, whereas, on the other hand, we can apply the first part of \eqref{eq:Z-pinned-boundary-conversion} in the numerator and \eqref{eq:fundamental-identity-general} (more specifically \eqref{eq:fundamental-identity-Z+Z}) in the denominator to obtain
\be \label{eq:R=zhat/(1+zhat)}
	R(x, \{x\} \cup \Lambda \mid B) = \frac{Z(x, \Lambda \mid B)}{Z(\Lambda \mid B) + Z(x, \Lambda \mid B)} = \frac{\widehat z(x, \Lambda \mid B)}{1 + \widehat z(x, \Lambda \mid B)},
\ee
provided that $Z(\Lambda \mid B) \neq 0$ and $\widehat z(x, \Lambda \mid B) \neq -1$.

\begin{remark}
	In the usual jargon, the correlation $R(X, \Lambda \mid B)$, if well-defined, would be set to zero whenever $X \not\subset \Lambda$. We deviate from this convention to streamline our definitions of partition functions, correlations and effective activities.
\end{remark}

\section{Proof of Theorem~\ref{thm:dobrushin-Z-non-zero-finite}} \label{sec:proof}

\noindent We now adapt the argument of \cite{ref:bencs-buys} to arbitrary complex interactions. To establish the central recursion formula for effective activities, we fix some initial arguments $x \in \X$ and $\Lambda \Subset \X \setminus \{x\}$. After demonstrating that our criterion is tailored to withstand the modifications that $W$ is subjected to over the course of this recursion, Theorem~\ref{thm:dobrushin-Z-non-zero-finite} follows by induction. As alluded to earlier, the case of pairwise interaction is a lot easier, taking only about one page in \cite[Eqs.~(3.10) through (3.16)]{ref:scott-sokal} to derive the recursion formula. The multi-body case is more intricate.

\subsection{Interpolating between interactions} \label{subsec:step-1-interpolation}

Fix some total order $\preceq$ on
\[
	\mathbf F(x) : = \{X \Subset \X \mid x \in X\},
\]
denote by $\prec$ the relation without identity and define, for each $X \in \mathbf F(x)$, the reduced set $X' : = X \setminus \{x\}$ as well as an interaction $W_X$ given by
\[
	W_X(Y) : = \begin{cases}
		W(Y \mid x) = W(Y) W(\{x\} \cup Y) & \text{if } x \notin Y, \{x\} \cup Y \prec X, \\
		1 & \text{if } x \in Y \in \mathbf F(x) \setminus \{X\}, \\
		W(Y) & \text{otherwise},
	\end{cases}
\]
for all $Y \Subset \X$. The extension to a conditional interaction $W_X : \mathbf F \times \mathbf F \to \C$ follows \eqref{eq:W-conditional-def} by decorating every $W$ therein with the index $X$. Note that $W_X$ only allows $x$ to interact along one subconfiguration, namely $X$, that every $S \in \mathbf F(x)$ with $S \prec X$ has its interaction transferred to $S'$ and that no trace of interactions along any $T \in \mathbf F(x)$ with $X \prec T$ remains. Functions and quantities derived from $W_X$ are denoted $\kappa_X$, $Z_X(\ldots)$, etc.

It should be intuitively clear that bonds $X \in \mathbf F(x)$ with trivial interaction $W(X) = 1$ are negligible until some modification, by conditioning or otherwise, disrupts the latter triviality. Noting the restricted domain $\Lambda$, let us fix some $\mathbf E \subset \mathbf F(x)$ containing all immediately relevant bonds in the sense that
\[
	\mathbf E \supset \mathbf E(x, \Lambda) : = \{X \in \mathbf F(x) \mid W(X) \neq 1, X' \subset \Lambda\}.
\]
Admittedly, the case of strict inclusion is not particularly relevant here since we do not actually perturb $W$. Our restriction to $\mathbf E$ is also related to a persistent convention where we set products equal to $0$ if at least one factor is well-defined and vanishes, even if some other factors are ill-defined.

\begin{lemma} \label{lem:step-1-interpolation}
	Suppose that $Z(\Lambda) \neq 0$ and, unless $z(x) W(x) = 0$, assume also $Z_X(\Lambda) \neq 0$ for all $X \in \mathbf E$. Then
	\be \label{eq:step-1-interpolation}
		\widehat z(x, \Lambda) = z(x) W(x) \prod_{X \in \mathbf E} \frac{Z_X(\Lambda \mid x)}{Z_X(\Lambda)}.
	\ee
	In particular, $z(x) = 1$ then implies $\widehat z(x, \Lambda) = \prod_{X \in \mathbf E} \widehat z_X(x, \Lambda)$.
\end{lemma}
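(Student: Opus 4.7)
The plan is to reduce the lemma to a telescoping identity on conditional partition functions. Applying \eqref{eq:Z-pinned-boundary-conversion} to $\widehat z(x, \Lambda) = Z(x, \Lambda)/Z(\Lambda)$ yields immediately
\[
	\widehat z(x, \Lambda) = z(x) W(x) \frac{Z(\Lambda \mid x)}{Z(\Lambda)};
\]
if $z(x) W(x) = 0$, both sides of \eqref{eq:step-1-interpolation} vanish (the left because every summand of $Z(x, \Lambda)$ carries the factor $z(x) W(x)$), so that case is trivial. Otherwise the hypothesis $Z_X(\Lambda) \neq 0$ for every $X \in \mathbf E$ is in force, and the lemma reduces to showing
\[
	\frac{Z(\Lambda \mid x)}{Z(\Lambda)} = \prod_{X \in \mathbf E} \frac{Z_X(\Lambda \mid x)}{Z_X(\Lambda)}.
\]

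Enumerating $\mathbf E = \{X_1 \prec \dots \prec X_n\}$ in the fixed order, I would prove the three telescoping identities
\[
	Z_{X_1}(\Lambda) = Z(\Lambda), \qquad Z_{X_{k+1}}(\Lambda) = Z_{X_k}(\Lambda \mid x) \text{ for } 1 \leq k < n, \qquad Z_{X_n}(\Lambda \mid x) = Z(\Lambda \mid x),
\]
under which the right-hand product collapses to $Z(\Lambda \mid x)/Z(\Lambda)$. All three reduce to a single identity of Boltzmann factors,
\[
	\kappa_{X_{k+1}}(Y) = \kappa_{X_k}(Y \mid x) \quad \text{for all } Y \subset \Lambda,
\]
the extreme cases corresponding to $k = 0$ (``nothing $\prec X_1$'') and $k = n$ (``$W(\,\cdot\, \mid x)$ activates every bond through $x$''). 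To verify the identity I would expand both sides as $\kappa(Y)$ times a product of factors $W(\{x\} \cup S)$ over nonempty $S \subset Y$, sorted by the position of $\{x\} \cup S$ relative to $X_k$ and $X_{k+1}$ under $\preceq$. The key observation is that every $\{x\} \cup S \in \mathbf F(x) \setminus \mathbf E$ with $S \subset \Lambda$ has $W(\{x\} \cup S) = 1$ (else it would lie in $\mathbf E(x, \Lambda) \subset \mathbf E$), so both products collapse to $\prod_j W(X_j)$ indexed by those $j \leq k$ with $X_j' \neq \varnothing$ and $X_j' \subset Y$. This case-tracking in the definition of $W_X$, together with the corner case $X = \{x\}$ where $X' = \varnothing$ makes the ``newly activated bond'' factor degenerate to $1$, is the main technical burden, though not a conceptual one.

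For the concluding ``in particular'', I would apply \eqref{eq:Z-pinned-boundary-conversion} to the $W_X$-model for each $X \in \mathbf E$, obtaining $\widehat z_X(x, \Lambda) = z(x) W_X(x) Z_X(\Lambda \mid x)/Z_X(\Lambda)$. Since $W_X(x) = W(x)$ when $X = \{x\}$ and $W_X(x) = 1$ otherwise, and since $\{x\} \in \mathbf E$ exactly when $W(x) \neq 1$ (else $W(x) = 1$), the singleton contributions always multiply to $\prod_{X \in \mathbf E} W_X(x) = W(x)$. Substituting $z(x) = 1$ and multiplying $\widehat z_X(x, \Lambda)$ over $X \in \mathbf E$ then reproduces the main formula verbatim.
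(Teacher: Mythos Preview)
Your argument is correct and follows essentially the same route as the paper's proof: both reduce to the telescoping identity via $\kappa_{X_{k+1}}(Y) = \kappa_{X_k}(Y \mid x)$ for $Y \subset \Lambda$, with the same use of \eqref{eq:Z-pinned-boundary-conversion} and the same singleton-factor bookkeeping for the ``in particular'' clause. The only point you gloss over is that $\mathbf E$ need not be finite (it merely contains the finite set $\mathbf E(x,\Lambda)$), so writing $\mathbf E = \{X_1 \prec \dots \prec X_n\}$ is not a priori justified; the paper handles this by first observing that factors with $X \notin \mathbf E(x,\Lambda)$ contribute trivially (since then $\kappa_X(Y) = \kappa_X(Y\mid x)$ for all $Y \subset \Lambda$), which is precisely your ``key observation'' put to slightly different use.
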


\begin{proof}
	By \eqref{eq:Z-pinned-boundary-conversion}, we can write
	\[
		\widehat z(x, \Lambda) = z(x) W(x) \frac{Z(\Lambda \mid x)}{Z(\Lambda)}.
	\]
	If $z(x) W(x) = 0$, then \eqref{eq:step-1-interpolation} follows trivially by our convention. Otherwise, we need to show that
	\[
		\frac{Z(\Lambda \mid x)}{Z(\Lambda)} = \prod_{X \in \mathbf E} \frac{Z_X(\Lambda \mid x)}{Z_X(\Lambda)},
	\]
	where all denominators are assumed to be non-zero. Note first that, for every $X \in \mathbf F(x) \setminus \mathbf E(x, \Lambda)$ and all $Y \subset \Lambda$, we have $\kappa_X(Y) = \kappa_X(Y \mid x)$ so we can without loss assume $\mathbf E$ finite. Similarly, $\mathbf E(x, \Lambda) = \varnothing$ implies $\kappa(Y) = \kappa(Y \mid x)$ for all $Y \subset \Lambda$, making the case of empty $\mathbf E$ trivial. In the remaining cases of interest, let us enumerate the elements of $\mathbf E$ as $X_1, \ldots, X_{|\mathbf E|}$ such that the ordering of indices coincides with that given by $\preceq$. Observe then that, for all $Y \subset \Lambda$, we have $\kappa(Y) = \kappa_{X_1}(Y)$ as well as
	\[
		\kappa_{X_j}(Y \mid x) = \begin{cases}
			\kappa_{X_{j+1}}(Y) & \text{if } j \in \{1, \ldots, |\mathbf E|-1\}, \\
			\kappa(Y \mid x) & \text{if } j = |\mathbf E|,
		\end{cases}
	\]
	which implies that
	\[
		\frac{Z(\Lambda \mid x)}{Z(\Lambda)} =  \prod_{j=1}^{|\mathbf E|} \frac{Z_{X_j}(\Lambda \mid x)}{Z_{X_j}(\Lambda)} = \prod_{X \in \mathbf E} \frac{Z_X(\Lambda \mid x)}{Z_X(\Lambda)}
	\]
	is just a telescoping product expansion. With this, \eqref{eq:step-1-interpolation} is proven.
	
	If $z(x) = 1$, we can use the observation that
	\be \label{eq:kappa_X(x)}
		\kappa_X(x) = \begin{cases}
			W(x) & \text{if } X = \{x\}, \\
			1 & \text{otherwise},
		\end{cases}
	\ee
	for all $X \in \mathbf F(x)$ and write the right-hand side of \eqref{eq:step-1-interpolation} as
	\[
		\prod_{X \in \mathbf E} \frac{z(x) \kappa_X(x) Z_X(\Lambda \mid x)}{Z_X(\Lambda)} = \prod_{X \in \mathbf E} \frac{Z_X(x, \Lambda)}{Z_X(\Lambda)} = \prod_{X \in \mathbf E} \widehat z_X(x, \Lambda).
	\]
	This concludes the proof.
\end{proof}

\begin{remark}
	Identity \eqref{eq:step-1-interpolation} is our general version of line~5 on \cite[page~8]{ref:bencs-buys}.
\end{remark}

\subsection{Removing the root}

The next step consists of rewriting the factors in \eqref{eq:step-1-interpolation}. In particular, $x$ is removed from the boundary condition in each numerator by explicitly accounting for the one interaction term it contributes. Recall also that $X' = X \setminus \{x\}$ for all $X \in \mathbf F(x)$.

\begin{lemma} \label{lem:step-2-removal}
	Let $X \in \mathbf F(x)$ such that $Z_X(\Lambda) \neq 0$. If $X \neq \{x\}$, then
	\be \label{eq:step-2-removal-individual}
		\frac{Z_X(\Lambda \mid x)}{Z_X(\Lambda)} = 1 + (W(X) - 1) R_X(X', \Lambda) \1_{\{X' \subset \Lambda\}}.
	\ee
	Consequently, $\widehat z_X(x, \Lambda) = z(x) (1 + (W(X) - 1) R_X(X', \Lambda) \1_{\{X' \subset \Lambda\}})$, which holds trivially in the case $X = \{x\}$.
\end{lemma}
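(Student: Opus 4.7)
The plan is to expand both partition functions directly from the definitions, then compute the ratio of Boltzmann factors $\kappa_X(Y \mid x)/\kappa_X(Y)$ for a summand configuration $Y \subset \Lambda$, and finally read off the claimed correction.

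First I would unfold the conditional Boltzmann factor via \eqref{eq:W-conditional-def}: for $Y \subset \Lambda$ (so $x \notin Y$), one has $W_X(S \mid x) = W_X(S)\, W_X(S \cup \{x\})$ for every non-empty $S \subset Y$, hence
\[
    \frac{\kappa_X(Y \mid x)}{\kappa_X(Y)} = \prod_{\varnothing \neq S \subset Y} W_X(S \cup \{x\}).
\]
Now I would invoke the definition of $W_X$: since $S \cup \{x\} \in \mathbf F(x)$, the middle case gives $W_X(S \cup \{x\}) = 1$ whenever $S \cup \{x\} \neq X$, while the ``otherwise'' case yields $W_X(X) = W(X)$ when $S = X'$. (Here the hypothesis $X \neq \{x\}$ ensures $X' \neq \varnothing$, so $S = X'$ is admissible in the product.) Consequently, the entire product collapses to $W(X)$ if $X' \subset Y$ and to $1$ otherwise, which compactly reads
\[
    \kappa_X(Y \mid x) = \kappa_X(Y) \bigl(1 + (W(X)-1)\, \1_{\{X' \subset Y\}}\bigr).
\]

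Summing against $z^Y$ over $Y \subset \Lambda$ then splits $Z_X(\Lambda \mid x)$ into $Z_X(\Lambda)$ plus the correction $(W(X) - 1) \sum_{Y : X' \subset Y \subset \Lambda} z^Y \kappa_X(Y)$. The latter sum vanishes if $X' \not\subset \Lambda$ (giving the indicator), and otherwise equals $Z_X(X', \Lambda)$ by the definition of the pinned partition function. Dividing by $Z_X(\Lambda) \neq 0$ turns this into the right-hand side of \eqref{eq:step-2-removal-individual} after recognising $R_X(X', \Lambda) = Z_X(X', \Lambda)/Z_X(\Lambda)$.

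For the consequence, I would apply \eqref{eq:Z-pinned-boundary-conversion} with pinned set $\{x\}$ (inside the $W_X$-world) to get $Z_X(x, \Lambda) = z(x) \kappa_X(\{x\}) Z_X(\Lambda \mid x)$, then use \eqref{eq:kappa_X(x)} (for $X \neq \{x\}$) so that the prefactor is simply $z(x)$, and combine with \eqref{eq:step-2-removal-individual}. The trivial case $X = \{x\}$ is immediate: both $R_X(\varnothing, \Lambda) = 1$ and $\1_{\{\varnothing \subset \Lambda\}} = 1$, so the asserted formula reduces to $z(x) W(x)$, which is $\widehat z_X(x, \Lambda)$ by direct computation (no proper subsets contribute to the product defining the ratio $\kappa_X(Y \mid x)/\kappa_X(Y)$). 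The only real obstacle is the careful case analysis in the definition of $W_X$ to verify that $S = X'$ is the unique non-trivial contributor; once that is done, the rest is a bookkeeping exercise.
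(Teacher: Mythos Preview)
Your proposal is correct and follows essentially the same approach as the paper: both arguments reduce to the identity $\kappa_X(Y \mid x) = \kappa_X(Y)\bigl(1 + (W(X)-1)\1_{\{X' \subset Y\}}\bigr)$ for $Y \subset \Lambda$, then sum over $Y$ and divide by $Z_X(\Lambda)$. Your derivation of that identity via the product $\prod_{\varnothing \neq S \subset Y} W_X(S \cup \{x\})$ is a bit more explicit than the paper's, but the substance and the handling of the consequence (via \eqref{eq:Z-pinned-boundary-conversion} and \eqref{eq:kappa_X(x)}) are the same.
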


\begin{proof}
	For now, assume $X \neq \{x\}$. Since, for all $Y \Subset \X \setminus \{x\}$,
	\[
		\kappa_X(Y \mid x) = \begin{cases}
			\kappa_X(Y) & \text{if } X' \not\subset Y, \\
			W(X) \kappa_X(Y) & \text{if } X' \subset Y,
		\end{cases}
	\]
	or equivalently $\kappa_X(Y \mid x) = \kappa_X(Y) + (W(X) - 1) \kappa_X(Y) \1_{\{X' \subset Y\}}$, we have
	\[
		Z_X(\Lambda \mid x) = Z_X(\Lambda) + (W(X) - 1) Z_X(X', \Lambda) \1_{\{X' \subset \Lambda\}},
	\]
	which is closely related to \cite[Eq.~(2.2)]{ref:galvin-et-al} or the last item in \cite[Lemma~2.2]{ref:bencs-buys}. Dividing the last identity by $Z_X(\Lambda)$ yields \eqref{eq:step-2-removal-individual}.
	
	In view of \eqref{eq:kappa_X(x)}, multiplying by $z(x) W_X(x) = z(x)$ leads to the identity
	\[
		\widehat z_X(x, \Lambda) = z(x) (1 + (W(X) - 1) R_X(X', \Lambda) \1_{\{X' \subset \Lambda\}}),
	\]
	cf.\ the second part of \eqref{eq:Z-pinned-boundary-conversion}. Alternatively, we can modify the above argument and start with the observation that, for all $Y \Subset \X \setminus\{x\}$,
	\[
		\kappa_X(\{x\} \cup Y) = \kappa_X(x) \kappa_X(Y \mid x) = \kappa_X(Y) + (W(X) - 1) \kappa_X(Y) \1_{\{X' \subset Y\}},
	\]
	trivially extending to the case $X = \{x\}$ in the form
	\[
		\kappa_{\{x\}}(\{x\} \cup Y) = W(x) \kappa_{\{x\}}(Y) = \kappa_{\{x\}}(Y) + (W(x) - 1) \kappa_{\{x\}}(Y).
	\]
	In particular, the latter case ultimately translates into
	\[
		\widehat z_{\{x\}}(x, \Lambda) = z(x) W(x) = z(x)(1 + (W(x) - 1)),
	\]
	noting that $R_{\{x\}}(\varnothing, \Lambda) = 1$.
\end{proof}

\begin{remark}
	The identity \eqref{eq:step-2-removal-individual} corresponds to lines 12, 13 on \cite[page~8]{ref:bencs-buys}.
\end{remark}

Applying Lemma~\ref{lem:step-2-removal} to all factors in \eqref{eq:step-1-interpolation} nets us the following.

\begin{cor} \label{cor:step-2-removal}
	Suppose that the hypotheses of Lemma~\ref{lem:step-1-interpolation} hold. Then
	\be \label{eq:step-2-removal-product}
		\widehat z(x, \Lambda) = z(x) \prod_{X \in \mathbf F(x) : X' \subset \Lambda} \left( 1 + (W(X) - 1) R_X(X', \Lambda) \right).
	\ee
\end{cor}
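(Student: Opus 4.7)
The plan is to chain the two preceding lemmas by direct substitution, with one bookkeeping step that absorbs the prefactor $z(x) W(x)$ of \eqref{eq:step-1-interpolation} into the product as its $X = \{x\}$ factor. Concretely, I would instantiate Lemma~\ref{lem:step-1-interpolation} with the choice
\[
\mathbf E := \{X \in \mathbf F(x) : X' \subset \Lambda\},
\]
which visibly contains $\mathbf E(x, \Lambda)$ and for which the required non-vanishing of each $Z_X(\Lambda)$ is part of the assumed hypotheses. Under this choice \eqref{eq:step-1-interpolation} reads
\[
\widehat z(x, \Lambda) = z(x) W(x) \prod_{X \in \mathbf F(x) : X' \subset \Lambda} \frac{Z_X(\Lambda \mid x)}{Z_X(\Lambda)}.
\]

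For each factor with $X \neq \{x\}$, Lemma~\ref{lem:step-2-removal}'s identity~\eqref{eq:step-2-removal-individual} applies, and since $X' \subset \Lambda$ is built into the range of the product, it reduces to the clean form $1 + (W(X) - 1) R_X(X', \Lambda)$. For the remaining factor at $X = \{x\}$, the statement and proof of Lemma~\ref{lem:step-2-removal} already extend to this case and record that
\[
\widehat z_{\{x\}}(x, \Lambda) = z(x) W(x) \frac{Z_{\{x\}}(\Lambda \mid x)}{Z_{\{x\}}(\Lambda)} = z(x) \bigl(1 + (W(x) - 1) R_{\{x\}}(\varnothing, \Lambda)\bigr) = z(x) W(x),
\]
forcing $Z_{\{x\}}(\Lambda \mid x)/Z_{\{x\}}(\Lambda) = 1$ whenever $z(x) W(x) \neq 0$. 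The degenerate case $z(x) W(x) = 0$ makes both sides of \eqref{eq:step-2-removal-product} vanish automatically via either the overall $z(x)$ or the $X = \{x\}$ factor $1 + (W(x) - 1) = W(x) = 0$, so needs no separate treatment.

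Finally, using $R_{\{x\}}(\varnothing, \Lambda) = Z_{\{x\}}(\varnothing, \Lambda)/Z_{\{x\}}(\Lambda) = 1$, I would recognise the leftover prefactor $W(x)$ as exactly the $X = \{x\}$ contribution $1 + (W(x) - 1) R_{\{x\}}(\varnothing, \Lambda)$ demanded by \eqref{eq:step-2-removal-product}. Reinserting it as the $X = \{x\}$ term of the product yields the claimed identity. The only delicate bookkeeping lies in this $X = \{x\}$ factor, which is excluded from the main identity~\eqref{eq:step-2-removal-individual}; however, the concluding paragraph of Lemma~\ref{lem:step-2-removal}'s proof already treats it explicitly, so no substantial obstacle remains.
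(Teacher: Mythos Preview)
Your approach is essentially the paper's: combine Lemmas~\ref{lem:step-1-interpolation} and~\ref{lem:step-2-removal}, then absorb the prefactor $W(x)$ as the $X=\{x\}$ factor via $R_{\{x\}}(\varnothing,\Lambda)=1$. One bookkeeping point deserves care, though. The hypotheses of Lemma~\ref{lem:step-1-interpolation} guarantee $Z_X(\Lambda)\neq 0$ only for $X$ in the set $\mathbf E$ fixed at the outset of Section~\ref{sec:proof}, not for every $X$ in your larger choice $\{X\in\mathbf F(x):X'\subset\Lambda\}$; your assertion that this non-vanishing ``is part of the assumed hypotheses'' is therefore not justified in general. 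The paper sidesteps this by keeping the given $\mathbf E$, obtaining the product over $X\in\mathbf E$ with $X'\subset\Lambda$, and only then extending the index set to all of $\mathbf F(x)$: any extra $X$ necessarily has $W(X)=1$, so its factor is $1+0\cdot R_X(X',\Lambda)=1$ by the paper's product convention even when $R_X(X',\Lambda)$ happens to be ill-defined. This is a minor fix rather than a conceptual gap.
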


\begin{proof}
	As indicated already, combining Lemmas~\ref{lem:step-1-interpolation} and \ref{lem:step-2-removal} yields
	\[
		\widehat z(x, \Lambda) = z(x) \prod_{X \in \mathbf E : X' \subset \Lambda} \left( 1 + (W(X) - 1) R_X(X', \Lambda) \right),
	\]
	noting the trivial/conventional treatment of the factor for $X = \{x\}$. The factors in \eqref{eq:step-2-removal-product} with $X \notin \mathbf E(x, \Lambda)$ are made irrelevant by our convention.
\end{proof}

\begin{remark}
	Lemma~\ref{lem:step-2-removal} provides a special instance of the Kirkwood--Salsburg equation. The latter is derived in the beginning of Section~\ref{sec:alternative} and its direct use in our alternative approach bypasses the need for the interpolating interactions $W_X$, $X \in \mathbf F(x)$, entirely.
\end{remark}

\subsection{Factorising correlations}

In the third step, we express the previously obtained correlations as products of single-site correlations which are, in turn, rewritten in terms of effective activities via \eqref{eq:R=zhat/(1+zhat)}. The first part of this programme is achieved via the following general observation.

\begin{lemma} \label{lem:R-factorisation}
	Let $Y \Subset \X$, $\preceq$ an arbitrary total order on $Y$ and suppose that $Z(\Lambda \mid \{y' \in Y \mid y' \prec y\}) \neq 0$ for all $y \in Y$. Then
	\be \label{eq:R-factorisation}
		R(Y, \Lambda) = \prod_{y \in Y} R(y, \Lambda \mid \{y' \in Y \mid y' \prec y\}).
	\ee
\end{lemma}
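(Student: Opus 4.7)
The plan is to enumerate the elements of $Y$ compatibly with $\preceq$ as $y_1 \prec y_2 \prec \cdots \prec y_n$, set $B_k := \{y_1, \ldots, y_{k-1}\}$ (so that $B_1 = \varnothing$ and $B_{n+1} = Y$), and collapse the product on the right-hand side of \eqref{eq:R-factorisation} by a telescoping argument built on identity \eqref{eq:Z-pinned-boundary-conversion}.

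First I would apply \eqref{eq:Z-pinned-boundary-conversion} to the numerator of each factor on the right. Since $y_k \notin B_k$, the identity yields
\[
    R(y_k, \Lambda \mid B_k) = \frac{Z(y_k, \Lambda \mid B_k)}{Z(\Lambda \mid B_k)} = z(y_k)\,\kappa(y_k \mid B_k)\, \frac{Z(\Lambda \mid B_{k+1})}{Z(\Lambda \mid B_k)},
\]
where by hypothesis all the denominators $Z(\Lambda \mid B_k)$ are non-zero. Taking the product over $k = 1, \ldots, n$, the $Z(\Lambda \mid B_k)$ factors telescope, leaving
\[
    \prod_{y \in Y} R(y, \Lambda \mid \{y' \in Y \mid y' \prec y\}) = z^Y \left( \prod_{k=1}^n \kappa(y_k \mid B_k) \right) \frac{Z(\Lambda \mid Y)}{Z(\Lambda)}.
\]

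On the other side, another application of \eqref{eq:Z-pinned-boundary-conversion} with empty boundary gives $R(Y, \Lambda) = z^Y \kappa(Y) Z(\Lambda \mid Y)/Z(\Lambda)$. So the proof reduces to the combinatorial identity
\[
    \kappa(Y) = \prod_{k=1}^n \kappa(y_k \mid B_k),
\]
which is a straightforward induction on $n$ via the conditional multiplicativity \eqref{eq:kappa-conditionally-multiplicative}: one splits $Y = \{y_n\} \cup B_n$ with disjoint parts to get $\kappa(Y) = \kappa(y_n \mid B_n)\,\kappa(B_n)$, then applies the induction hypothesis to $\kappa(B_n)$.

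The argument is essentially pure bookkeeping; the main (minor) obstacle is just keeping track of which factor of $\kappa$ sits under which boundary condition, which is handled cleanly by the choice of enumeration. Note that no non-vanishing hypothesis on $Z(\Lambda \mid Y)$ is needed because it appears only in numerators on both sides and cancels along the telescoping.
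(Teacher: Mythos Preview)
Your proof is correct and essentially identical to the paper's own argument: both enumerate $Y$ as $y_1, \ldots, y_n$, apply \eqref{eq:Z-pinned-boundary-conversion} to rewrite each correlation, telescope the ratio $Z(\Lambda \mid Y)/Z(\Lambda)$, and use the conditional multiplicativity \eqref{eq:kappa-conditionally-multiplicative} to collapse $\prod_k \kappa(y_k \mid B_k)$ into $\kappa(Y)$. The only cosmetic difference is that the paper works from $R(Y,\Lambda)$ outward while you work from the product inward.
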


\begin{proof}
	Noting that $\{y' \in Y \mid y' \prec y\} = \varnothing$ for $y = \min_\preceq(Y)$, all terms in \eqref{eq:R-factorisation} are well-defined. Imitating arguments from previous proofs, let us enumerate the elements of $Y$ as $y_1, \ldots, y_{|Y|}$ in increasing order with respect to $\preceq$ and write
	\begin{align*}
		R(Y, \Lambda) & = z^Y \kappa(Y) \frac{Z(\Lambda \mid Y)}{Z(\Lambda)} \\
		& = \prod_{j=1}^{|Y|} z(y_j) \kappa(y_j \mid \{y_1, \ldots, y_{j-1}\}) \frac{Z(\Lambda \mid \{y_1, \ldots, y_{j-1}\} \cup \{y_j\})}{Z(\Lambda \mid \{y_1, \ldots, y_{j-1}\})} \\
		& = \prod_{j=1}^{|Y|} R(y_j, \Lambda \mid \{y_1, \ldots, y_{j-1}\}),
	\end{align*}
	thereby obtaining \eqref{eq:R-factorisation}. The first and third of the above equalities are applications of the definition of correlations together with the second part of \eqref{eq:Z-pinned-boundary-conversion}. The intermediate transformation uses the definition of $z^Y$, the conditional multiplicativity \eqref{eq:kappa-conditionally-multiplicative} of $\kappa$ and the obvious expansion of $Z(\Lambda \mid Y) / Z(\Lambda)$ into a telescoping product.
\end{proof}

We now arrive at the recursion formula for effective activities generalising \cite[Lemma~2.4 and Theorem~2.7]{ref:bencs-buys}. In order to confine the formula to one line while still utilising our notational framework, we equip each $X \in \mathbf F(x)$ with a total order $\preceq$ on $X'$ and define
\[
	X_{\prec x'}' : = \{x'' \in X' \mid x'' \prec x'\}
\]
for all $x' \in X'$. The slightly abusive ubiquity of the symbol $\preceq$ for all our total orders should not serve as a major source of confusion.

\begin{prop} \label{prop:step-3-factorisation}
	Suppose that $Z(\Lambda) \neq 0$ and, unless $z(x) W(x) = 0$, assume also that both
	\[
		Z_X(\Lambda \setminus \{x'\} \mid X_{\prec x'}') \neq 0 \quad \text{and} \quad \widehat z_X(x', \Lambda \setminus \{x'\} \mid X_{\prec x'}') \neq -1
		\]
		for all $X \in \mathbf E$ and $x' \in X'$. Then $\widehat z(x, \Lambda)$ equals
	\be \label{eq:step-3-factorisation-total}
		 z(x) \prod_{\substack{X \in \mathbf F(x) : \\ X' \subset \Lambda}} \left( 1 + (W(X) - 1) \prod_{x' \in X'} \frac{\widehat z_X(x', \Lambda \setminus \{x'\} \mid X_{\prec x'}')}{1 + \widehat z_X(x', \Lambda \setminus \{x'\} \mid X_{\prec x'}')} \right).
	\ee
\end{prop}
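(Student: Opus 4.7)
The plan is to chain together three results already established in the paper, each applied not to the original interaction $W$ but to the auxiliary interactions $W_X$ introduced in Section~\ref{subsec:step-1-interpolation}. The starting point is Corollary~\ref{cor:step-2-removal}, whose right-hand side already has the outer structure of \eqref{eq:step-3-factorisation-total} save for the fact that the inner object is the multi-site correlation $R_X(X', \Lambda)$ rather than a product over $x' \in X'$ of effective-activity ratios. The goal is therefore to rewrite each $R_X(X', \Lambda)$ in the desired product form.

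First, I would apply Corollary~\ref{cor:step-2-removal} to write $\widehat z(x, \Lambda)$ as the announced product $z(x) \prod_{X' \subset \Lambda} (1 + (W(X)-1) R_X(X', \Lambda))$. Second, for each $X \in \mathbf F(x)$ with $X' \subset \Lambda$ and $X \neq \{x\}$, I would invoke Lemma~\ref{lem:R-factorisation} applied to the interaction $W_X$, with $Y = X'$ equipped with the fixed total order $\preceq$, to obtain $R_X(X', \Lambda) = \prod_{x' \in X'} R_X(x', \Lambda \mid X'_{\prec x'})$. Third, for each single-site factor, I would use that $x' \in \Lambda$ to split $\Lambda = \{x'\} \cup (\Lambda \setminus \{x'\})$ and apply the $W_X$-version of identity \eqref{eq:R=zhat/(1+zhat)}, turning $R_X(x', \Lambda \mid X'_{\prec x'})$ into $\widehat z_X(x', \Lambda \setminus \{x'\} \mid X'_{\prec x'}) / (1 + \widehat z_X(x', \Lambda \setminus \{x'\} \mid X'_{\prec x'}))$. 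Substituting back yields \eqref{eq:step-3-factorisation-total}. The trivial case $X = \{x\}$ produces an empty inner product equal to $1$, leaving the outer factor $1 + (W(x)-1) = W(x)$, consistent with the corresponding assertion in Lemma~\ref{lem:step-2-removal}; the degenerate case $z(x) W(x) = 0$ is handled by the same product convention used in Lemma~\ref{lem:step-1-interpolation}, since the overall product vanishes through its $z(x)$ prefactor and the $X = \{x\}$ factor, rendering the remaining (possibly ill-defined) factors irrelevant.

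The main obstacle, and the only delicate bookkeeping, is checking that the non-vanishing hypotheses provided by the proposition are precisely those needed by each of the three cited results. The condition $\widehat z_X(x', \Lambda \setminus \{x'\} \mid X'_{\prec x'}) \neq -1$ together with $Z_X(\Lambda \setminus \{x'\} \mid X'_{\prec x'}) \neq 0$ feeds directly into identity \eqref{eq:R=zhat/(1+zhat)} for the last step. For Lemma~\ref{lem:R-factorisation} one needs $Z_X(\Lambda \mid X'_{\prec x'}) \neq 0$ for every $x' \in X'$; this follows by invoking \eqref{eq:fundamental-identity-Z(1+zhat)} (applied to $W_X$ with reference volume $\Lambda \setminus \{x'\}$ and additional site $x'$, all conditioned on $X'_{\prec x'}$) together with the two assumed non-vanishing conditions. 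Finally, for the initial call of Corollary~\ref{cor:step-2-removal} one needs $Z_X(\Lambda) \neq 0$ for all $X \in \mathbf E$; this is obtained from the same telescoping observation by specialising to $x' = \min_\preceq X'$, for which $X'_{\prec x'} = \varnothing$. Once these hypothesis-matchings are in place, the proof reduces to a mechanical concatenation of the three identities.
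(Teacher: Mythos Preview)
Your proposal is correct and follows essentially the same approach as the paper: both chain Corollary~\ref{cor:step-2-removal}, Lemma~\ref{lem:R-factorisation} (applied to $W_X$ with $Y=X'$), and identity~\eqref{eq:R=zhat/(1+zhat)}, and both verify the needed non-vanishing hypotheses by the fundamental identity~\eqref{eq:fundamental-identity-Z(1+zhat)} with the special case $x'=\min_\preceq X'$ yielding $Z_X(\Lambda)\neq 0$. The only cosmetic difference is that the paper establishes the prerequisites first and then applies the identities, whereas you describe the chain of identities first and then discharge the hypotheses.
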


\begin{proof}
	With our product convention in mind, we skip over the trivial case where $z(x) W(x) = 0$. We can also disregard every factor of the outer product in \eqref{eq:step-3-factorisation-total} with $X \notin \mathbf E(x, \Lambda)$. Given $X \in \mathbf E(x, \Lambda) \subset \mathbf E$, our hypothesis, together with \eqref{eq:fundamental-identity-Z(1+zhat)} and \eqref{eq:R=zhat/(1+zhat)}, yields
	\[
		Z_X(\Lambda \mid X_{\prec x'}') = Z_X(\Lambda \setminus \{x'\} \mid X_{\prec x'}') (1 + \widehat z_X(x', \Lambda \setminus \{x'\} \mid X_{\prec x'}')) \neq 0,
	\]
	with $Z_X(\Lambda) \neq 0$ following from choosing $x' = \min_\preceq(X')$, as well as
	\[
		R_X(x', \Lambda \mid X_{\prec x'}') = \frac{\widehat z_X(x', \Lambda \setminus \{x'\} \mid X_{\prec x'}')}{1 + \widehat z_X(x', \Lambda \setminus \{x'\} \mid X_{\prec x'}')}
	\]
	for all $x' \in X'$. Hence, the assumptions of Lemma~\ref{lem:R-factorisation} are met with $W_X$ and $X'$ instead of $W$ and $Y$, respectively, so we obtain
	\[
		R_X(X', \Lambda) = \prod_{x' \in X'} R_X(x', \Lambda \mid X_{\prec x'}') = \prod_{x' \in X'} \frac{\widehat z_X(x', \Lambda \setminus \{x'\} \mid X_{\prec x'}')}{1 + \widehat z_X(x', \Lambda \setminus \{x'\} \mid X_{\prec x'}')}.
	\]
	Varying $X$, we see that the hypotheses shared by Lemma~\ref{lem:step-1-interpolation} and Corollary~\ref{cor:step-2-removal} are met and \eqref{eq:step-3-factorisation-total} now finally follows by inserting the above factorisation of correlations into \eqref{eq:step-2-removal-product}.
	
	Although the claim is already proven, we reformulate the result again to continue facilitating a more direct comparison with \cite{ref:bencs-buys}. Assume that $\mathbf E$ is finite and
	\[
		\mathbf E = \{X_1, \ldots, X_{|\mathbf E|}\}, \quad X_j' = \{x_{j, 1}, \ldots, x_{j, |X_j'|}\} \subset \Lambda, \quad j \in \{1, \ldots, |\mathbf E|\},
	\]
	with all indices consistent with the respective total orders. We can then write \eqref{eq:step-3-factorisation-total} in the form
	\[
		z(x) \prod_{j=1}^{|\mathbf E|} \left( 1 + (W(X_j) - 1) \prod_{k=1}^{|X_j'|} \frac{\widehat z_{j, k}(x_{j, k}, \Lambda \setminus \{x_{j, 1}, \ldots, x_{j, k}\})}{1 + \widehat z_{j, k}(x_{j, k}, \Lambda \setminus \{x_{j, 1}, \ldots, x_{j, k}\})} \right),
	\]
	where, for all $j \in \{1, \ldots, |\mathbf E|\}$ and $k \in \{1, \ldots, |X_j'|\}$, the effective activity $\widehat z_{j, k}(\ldots)$ is formed with respect to the interaction extending the prescription
	\[
		W_{j, k}(Y) : = W_{X_j}(Y \mid \{x_{j, 1}, \ldots, x_{j, k-1}\}), \quad Y \Subset \X,
	\]
	and the corresponding reference volumes also explicitly reflect the effective excision of this implicit boundary condition.
\end{proof}

\begin{remark}
	For readers interested in or familiar with the original argument in \cite{ref:bencs-buys}, let us note that the aforementioned statements generalised here as Proposition~\ref{prop:step-3-factorisation} are technically connected via the so-called Weitz hypertree construction, the primary device of Bencs and Buys' argument we choose to ignore.
\end{remark}

\subsection{Checking compatibility of our criterion}

Recall that our criterion requires that, for every choice of $x \in \X$,
\be \label{eq:dobrushin-criterion-proof}
	|z(x)| \prod_{X \in \mathbf F(x)} \max\{|W(X)|, 1 + |W(X) - 1| \alpha^S \mid \varnothing \neq S \subset X \setminus \{x\}\} \leq r(x),
\ee
where $r : \X \to [0, 1)$ and $\alpha = \frac{r}{1 - r} : \X \to \R_+$ are fixed functions. If we want to inductively infer that all effective activities occurring throughout the recursive application of Proposition~\ref{prop:step-3-factorisation} are bounded by the corresponding values of $r$, then \eqref{eq:step-3-factorisation-total} seemingly makes the maxima in \eqref{eq:dobrushin-criterion-proof} unnecessary. If $r \geq 1/2$, hence $\alpha \geq 1$, there is no difference but, otherwise, the maxima are meant to ensure the stability of our criterion under the conditioning happening during the recursion.

\begin{lemma} \label{lem:criterion-conditional-stability}
	Let $B \Subset \X \setminus \{x\}$ and $Y \in \mathbf F(x)$ such that $Y \cap B = \varnothing$. Then
	\begin{align*}
		& \max\{|W(Y \mid B)|, 1 + |W(Y \mid B) - 1| \alpha^S \mid \varnothing \neq S \subset Y \setminus \{x\}\} \\
		& \quad \leq \prod_{\substack{X \in \mathbf F(x) : \\ X \setminus B = Y}} \max\{|W(X)|, 1 + |W(X) - 1| \alpha^S \mid \varnothing \neq S \subset X \setminus (\{x\} \cup B)\}.
	\end{align*}
\end{lemma}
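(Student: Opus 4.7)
The strategy is to unfold the conditional interaction on the left and reduce to an elementary inequality for products of complex numbers. By hypothesis $Y \cap B = \varnothing$, so the definition \eqref{eq:W-conditional-def} gives
\[
    W(Y \mid B) = \prod_{C \subset B} W(Y \cup C) = \prod_{X \in \mathcal I} W(X),
\]
where $\mathcal I := \{X \in \mathbf F(x) : X \setminus B = Y\}$ coincides with the indexing set of the product on the right-hand side (since $X \in \mathcal I$ iff $X = Y \cup C$ for some $C \subset B$). Moreover, $X \setminus (\{x\} \cup B) = Y \setminus \{x\}$ for every such $X$, so the admissible $S$ parametrising the two maxima agree. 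Writing
\[
    M_X := \max\{|W(X)|, \; 1 + |W(X) - 1|\alpha^{S'} \mid \varnothing \neq S' \subset Y \setminus \{x\}\},
\]
the claim reduces to bounding each value competing in the maximum on the left by $\prod_{X \in \mathcal I} M_X$.

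For the competitor $|W(Y \mid B)| = \prod_{X \in \mathcal I} |W(X)|$ this is immediate, since $|W(X)| \leq M_X$ by definition. For a competitor of the form $1 + |W(Y \mid B) - 1|\alpha^S$ with some fixed $\varnothing \neq S \subset Y \setminus \{x\}$, I would first prove the following general claim: for any finite family $(w_i)_{i \in I}$ in $\C$ and any $\alpha \geq 0$,
\[
    1 + \left|\prod_{i \in I} w_i - 1\right|\alpha \leq \prod_{i \in I} \max\{|w_i|, 1 + |w_i - 1|\alpha\}.
\]
Specialising $w_X := W(X)$ and $\alpha := \alpha^S$ then does the job, since the maximum defining $M_X$ is taken over a superset of $\{|W(X)|, \; 1 + |W(X) - 1|\alpha^S\}$.

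The claim itself follows by induction on $|I|$, the cases $|I| \in \{0, 1\}$ being trivial. For the inductive step, I would fix any $n \in I$ and set $W := \prod_{i \in I \setminus \{n\}} w_i$ and $N_i := \max\{|w_i|, 1 + |w_i - 1|\alpha\}$. The asymmetric identity $W w_n - 1 = W(w_n - 1) + (W - 1)$ and the triangle inequality give $|W w_n - 1|\alpha \leq |W|\,|w_n - 1|\alpha + |W - 1|\alpha$, and combining this with the induction hypothesis $1 + |W - 1|\alpha \leq \prod_{i \in I \setminus \{n\}} N_i$ and the crude bound $|W| \leq \prod_{i \in I \setminus \{n\}} N_i$ yields
\[
    1 + |W w_n - 1|\alpha \leq \prod_{i \in I \setminus \{n\}} N_i \cdot (1 + |w_n - 1|\alpha) \leq \prod_{i \in I} N_i,
\]
closing the induction. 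The main obstacle is picking the right expansion for $\prod w_i - 1$: the symmetric Möbius-style expansion $\sum_{\mathcal J \neq \varnothing} \prod_{i \in \mathcal J}(w_i - 1)$ turns out to be too lossy when $\alpha^S < 1$, whereas the one-step telescoping above is uniformly tight in $\alpha \geq 0$.
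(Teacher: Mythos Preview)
Your proof is correct and is essentially the paper's argument: both rest on the telescoping identity $\prod_i w_i - 1 = \sum_j (w_j - 1)\prod_{k \succ j} w_k$, with the paper writing it out in full (via Mayer's expansion regrouped by $\preceq$-minimal element, then ``reversing'' the bound into the product of maxima) while you run the same telescoping one step at a time by induction. Your closing remark that the symmetric M{\"o}bius expansion is too lossy is apt---the paper in fact starts from that expansion and then regroups precisely to recover the telescoping form.
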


\begin{proof}
	By definition of the conditional interaction via \eqref{eq:W-conditional-def},
	\[
		W(Y \mid B) = \prod_{C \subset B} W(Y \cup C) = \prod_{X \in \mathbf F(x) : X \setminus B = Y} W(X)
	\]
	and Mayer's trick of inserting $-1+1$ into each factor of the middle product yields the expansion
	\begin{align*}
		W(Y \mid B) - 1 & = \sum_{\mathcal C \subset \{C \subset B\} : \mathcal C \neq \varnothing} \prod_{C \in \mathcal C} (W(Y \cup C) - 1) \\
		& = \sum_{C \subset B} (W(Y \cup C) - 1) \prod_{D \subset B : C \prec D} W(Y \cup D),
	\end{align*}
	where the last expression simply groups the non-empty collections $\mathcal C$ from the previous line according to their minimal elements with respect to some arbitrary but fixed total order $\preceq$ on $\{C \subset B\}$.
	
	Now let $S \subset Y \setminus \{x\}$. Then the above implies
	\[
		|W(Y \mid B) - 1| \alpha^S \leq \sum_{C \subset B} |W(Y \cup C) - 1| \alpha^S \prod_{D \subset B : C \prec D} |W(Y \cup D)|
	\]
	and, via a straightforward intermediate bound using maxima, we can reverse Mayer's trick to obtain
	\begin{align*}
		1 + |W(Y \mid B) - 1| \alpha^S & \leq \prod_{C \subset B} \max\{|W(Y \cup C)|, 1 + |W(Y \cup C) - 1| \alpha^S\} \\
	& = \prod_{X \in \mathbf F(x) : X \setminus B = Y} \max\{|W(X)|, 1 + |W(X) - 1| \alpha^S\}
	\end{align*}
	Applying this bound to non-empty $S$ complements the first observation of the proof in a way that clearly implies the claim.
\end{proof}

Taking the product over the possible choices of $Y$ in Lemma~\ref{lem:criterion-conditional-stability} and using \eqref{eq:W-conditional-def}, notably cases two and three thereof, we obtain the following.

\begin{cor} \label{cor:criterion-conditional-stability}
	Let $B \Subset \X$. Then
	\begin{align*}
		& \prod_{Y \in \mathbf F(x)} \max\{|W(Y \mid B)|, 1 + |W(Y \mid B) - 1| \alpha^S \mid \varnothing \neq S \subset Y \setminus \{x\}\} \\
		& \quad \leq \1_{\{x \notin B\}} \prod_{X \in \mathbf F(x)} \max\{|W(X)|, 1 + |W(X) - 1| \alpha^S \mid \varnothing \neq S \subset X \setminus \{x\}\}.
	\end{align*}
\end{cor}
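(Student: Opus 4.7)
The plan is to split the argument into two cases depending on whether $x \in B$ or not, and in the second case to further split the outer product over $Y \in \mathbf F(x)$ according to whether $Y$ meets $B$.

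First I would dispose of the case $x \in B$: then the indicator on the right-hand side vanishes, so I only need to show that the left-hand side vanishes as well. The factor corresponding to $Y = \{x\} \in \mathbf F(x)$ already achieves this, since case two of \eqref{eq:W-conditional-def} gives $W(\{x\} \mid B) = 0$, and $Y \setminus \{x\} = \varnothing$ admits no non-empty $S$, reducing the maximum to $|W(\{x\} \mid B)| = 0$.

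In the case $x \notin B$ the right-hand side is the full product over $\mathbf F(x)$. I would partition the index set $\mathbf F(x)$ of the outer product on the left according to $Y \cap B$. For $Y$ with $Y \cap B \neq \varnothing$, necessarily $|Y| \geq 2$ (since $x \in Y$ and $x \notin B$, so $Y$ is not a singleton contained in $B$), so the third case of \eqref{eq:W-conditional-def} applies and $W(Y \mid B) = 1$, making the corresponding factor on the left equal to $1$ and hence negligible. For $Y$ with $Y \cap B = \varnothing$, Lemma~\ref{lem:criterion-conditional-stability} applies directly and bounds the $Y$-factor by a product indexed by those $X \in \mathbf F(x)$ with $X \setminus B = Y$; enlarging the range of $S$ in the resulting maxima from $X \setminus (\{x\} \cup B)$ to the superset $X \setminus \{x\}$ can only increase them, yielding exactly the factors appearing on the right-hand side of the corollary.

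The final step is a combinatorial bookkeeping observation: since $x \notin B$, the map $X \mapsto X \setminus B$ sends $\mathbf F(x)$ into $\{Y \in \mathbf F(x) : Y \cap B = \varnothing\}$, and its fibres $\{X \in \mathbf F(x) : X \setminus B = Y\}$ form a partition of $\mathbf F(x)$. Multiplying the factor-wise bounds over all admissible $Y$ therefore exactly reproduces the product over all $X \in \mathbf F(x)$ on the right, completing the proof. No step represents a substantial obstacle; the only care needed is tracking the two-step weakening (first Lemma~\ref{lem:criterion-conditional-stability}, then the enlargement of the admissible $S$-range) in the case $Y \cap B = \varnothing$, and verifying that the factors indexed by $Y$ with $Y \cap B \neq \varnothing$ really are trivial rather than active.
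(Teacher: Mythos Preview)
Your proposal is correct and follows essentially the same approach as the paper: the paper's one-sentence proof simply says to take the product over the admissible $Y$ in Lemma~\ref{lem:criterion-conditional-stability} and invoke cases two and three of \eqref{eq:W-conditional-def}, which is exactly what you have spelled out in detail. Your explicit handling of the fibre decomposition under $X \mapsto X \setminus B$ and the enlargement of the $S$-range from $X \setminus (\{x\} \cup B)$ to $X \setminus \{x\}$ fills in precisely the bookkeeping the paper leaves implicit.
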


Of course, we must also guarantee that our criterion survives the transition from $W$ to $W_X$ for arbitrary $X \in \mathbf F(x)$ (with $W(X) \neq 1$).

\begin{cor} \label{cor:criterion-interpolation-stability}
	Let $X \in \mathbf F(x)$. Then, for all $y \in \X$,
	\begin{align*}
		& \prod_{Y \in \mathbf F(y)} \max\{|W_X(Y)|, 1 + |W_X(Y) - 1| \alpha^S \mid \varnothing \neq S \subset Y \setminus \{y\}\} \\
		& \quad \leq \prod_{\substack{Y \in \mathbf F(y) : \\ x \in Y \Rightarrow Y \preceq X}} \max\{|W(Y)|, 1 + |W(Y) - 1| \alpha^S \mid \varnothing \neq S \subset Y \setminus \{y\}\}.
	\end{align*}
\end{cor}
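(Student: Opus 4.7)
The plan is a case analysis on whether $y = x$, mirroring the piecewise definition of $W_X$. If $y = x$, then $\mathbf F(y) = \mathbf F(x)$; by the definition of $W_X$, the LHS factor for every $Y \in \mathbf F(x) \setminus \{X\}$ equals $\max\{1, 1\} = 1$, while the factor for $Y = X$ coincides with the RHS factor indexed by $X$. The latter is a valid RHS index since $X \preceq X$, and every other RHS factor is at least $1$ because the max is taken over a set containing a value $1 + |W(Y) - 1| \alpha^S \geq 1$. So the inequality holds trivially in this case.

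For $y \neq x$, I would partition $\mathbf F(y)$ according to the case split of $W_X$ into $A_1 := \{Y : x \notin Y,\ \{x\} \cup Y \prec X\}$, $A_2 := \{Y : x \notin Y,\ \{x\} \cup Y \not\prec X\}$, $A_3 := \{X\} \cap \mathbf F(y)$, and $A_4 := \{Y \in \mathbf F(y) : x \in Y\} \setminus A_3$. Factors from $A_4$ are trivially $1$ and drop out; factors from $A_2$ (resp.\ $A_3$) equal the RHS factor indexed by $Y$ (resp.\ by $X$) verbatim. The substantive case is $A_1$, where $W_X(Y) = W(Y) W(\{x\} \cup Y) = W(Y \mid \{x\})$. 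Here I would invoke Lemma~\ref{lem:criterion-conditional-stability} with $\{x\}$ in the role of $B$ and $y$ in the role of $x$: the bound produced is a product of two factors indexed by $Y$ and $\{x\} \cup Y$; the first is literally the corollary's RHS factor for $Y$, and the second has its $S$-maximum over $Y \setminus \{y\}$, a subset of $(\{x\} \cup Y) \setminus \{y\}$, so it is dominated by the corollary's RHS factor for $\{x\} \cup Y$. Both indices are good on the RHS since $x \notin Y$ in the first case and $\{x\} \cup Y \prec X$ in the second.

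Taking the product over $Y \in \mathbf F(y)$ then yields a bound in terms of RHS factors that I must verify are pairwise distinct. The indices assigned fall into three disjoint families: (i) $Y \in A_1 \cup A_2$ (which omit $x$); (ii) $\{x\} \cup Y$ for $Y \in A_1$ (which contain $x$ and satisfy $\{x\} \cup Y \prec X$); and (iii) $X$ itself from $A_3$ (for which the index equals $X$). The disjointness of $A_1$ and $A_2$, the injectivity of $Y \mapsto \{x\} \cup Y$ on $A_1$, and the strict relation $\{x\} \cup Y \prec X$ on $A_1$ prevent any collisions between these families. Since each RHS factor is $\geq 1$, any good index not covered by this assignment only enlarges the bound, and the inequality follows. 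I expect the main obstacle to be precisely this combinatorial bookkeeping; the analytic content reduces to the identification $W_X(Y) = W(Y \mid \{x\})$ on $A_1$ and the invocation of Lemma~\ref{lem:criterion-conditional-stability}.
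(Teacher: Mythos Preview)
Your proposal is correct and follows essentially the same approach as the paper's own proof: both arguments hinge on the piecewise definition of $W_X$, invoke Lemma~\ref{lem:criterion-conditional-stability} with $B=\{x\}$ for the factors where $W_X(Y)=W(Y\mid x)$, and observe that the remaining factors are either trivially~$1$ or coincide verbatim with the corresponding RHS factors. The paper's proof is terser---it does not separate $y=x$ from $y\neq x$ and leaves the distinctness bookkeeping implicit---whereas you spell out the partition $A_1,\ldots,A_4$ and the injectivity check explicitly; this is a presentational difference only.
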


\begin{proof}
	Fix $y \in \X$ and recall that, for all $Y \Subset \X$,
	\[
		W_X(Y) = \begin{cases}
			W(Y \mid x) & \text{if } Y \notin \mathbf F(x), \{x\} \cup Y \prec X, \\
			1 & \text{if } Y \in \mathbf F(x) \setminus \{X\}, \\
			W(Y) & \text{otherwise}.
		\end{cases}
	\]
	Using Lemma~\ref{lem:criterion-conditional-stability} to deal with the first case, we obtain the partial bound
	\begin{align*}
		& \prod_{Y \in \mathbf F(y) \setminus \mathbf F(x)} \max\{|W_X(Y)|, 1 + |W_X(Y) - 1| \alpha^S \mid \varnothing \neq S \subset Y \setminus \{y\}\} \\
		& \quad \leq \prod_{\substack{Y \in \mathbf F(y) : \\ x \in Y \Rightarrow Y \prec X}} \max\{|W(Y)|, 1 + |W(Y) - 1| \alpha^S \mid \varnothing \neq S \subset Y \setminus \{y\}\}.
	\end{align*}
	At most one non-trivial factor, one with $Y = X$, is missing here.
\end{proof}

\begin{remark}
	The above ``stability assertions'' correspond to the casual observation in \cite{ref:bencs-buys} that the vertex degrees, giving explicit choices of $r$ and $\alpha$, are not increased by the modifications there, cf.\ also our Corollary~\ref{cor:bencs-buys-recursion-stability}.
\end{remark}

\subsection{Concluding the proof}

We now assemble the results of the preceding subsections into the inductive proof of Theorem~\ref{thm:dobrushin-Z-non-zero-finite} so, for the remainder of this section, we assume \eqref{eq:dobrushin-criterion-proof} to hold for every possible choice of $x \in \X$.

Starting from the trivial observation $Z(\varnothing) = 1$, we inductively assume $\Lambda \Subset \X \setminus \{x\}$ to be such that $Z(\Lambda) \neq 0$ or, more precisely,
\[
	0 < (1 - r)^\Lambda \leq |Z(\Lambda)| \leq (1 + r)^\Lambda.
\]
In particular, $\widehat z(x, \Lambda)$ is well-defined and, by the fundamental identity \eqref{eq:fundamental-identity-Z(1+zhat)},
\[
	Z(\{x\} \cup \Lambda) = Z(\Lambda)(1 + \widehat z(x, \Lambda)),
\]
completing the induction step comes down to proving $|\widehat z(x, \Lambda)| \leq r(x) < 1$. To that end, we want to employ Proposition~\ref{prop:step-3-factorisation} so we first observe that, for all $X \in \mathbf F(x)$ and every $x' \in X' = X \setminus \{x\}$, Corollaries~\ref{cor:criterion-conditional-stability} and \ref{cor:criterion-interpolation-stability} allow us to inductively assume
\[
	Z_X(\Lambda \setminus \{x'\} \mid X_{\prec x'}') \neq 0 \quad \text{and} \quad |\widehat z_X(x', \Lambda \setminus \{x'\} \mid X_{\prec x'}')| \leq r(x') < 1
\]
and, in particular,
\[
	\left |\frac{\widehat z_X(x', \Lambda \setminus \{x'\} \mid X_{\prec x'}')}{1 + \widehat z_X(x', \Lambda \setminus \{x'\} \mid X_{\prec x'}')} \right| \leq \frac{r(x')}{1 - r(x')} = \alpha(x').
\]
We may therefore use Proposition~\ref{prop:step-3-factorisation} and equate $\widehat z(x, \Lambda)$ with
\[
	z(x) \prod_{\substack{X \in \mathbf F(x) : \\ X' \subset \Lambda}} \left( 1 + (W(X) - 1) \prod_{x' \in X'} \frac{\widehat z_X(x', \Lambda \setminus \{x'\} \mid X_{\prec x'}')}{1 + \widehat z_X(x', \Lambda \setminus \{x'\} \mid X_{\prec x'}')} \right),
\]
where the shorthand notation implicitly depends on our fixed choice of $x$ and the factor for $X = \{x\}$ in the outer product is just $W(x)$. We then obtain
\[
	|\widehat z(x, \Lambda)| \leq |z(x)| |W(x)| \prod_{X \in \mathbf F(x) \setminus \{\{x\}\}} \left( 1 + |W(X) - 1| \alpha^{X \setminus \{x\}} \right) \leq r(x) < 1,
\]
the second to last bound following from \eqref{eq:dobrushin-criterion-proof}. As mentioned before, we can now finally infer $Z(\{x\} \cup \Lambda) \neq 0$ or, more precisely,
\[
	0 < (1 - r)^{\{x\} \cup \Lambda} \leq |Z(\{x\} \cup \Lambda)| \leq (1 + r)^{\{x\} \cup \Lambda}
\]
and Theorem~\ref{thm:dobrushin-Z-non-zero-finite} follows inductively by extending $\Lambda$ one site at a time.

\section{Application to hypergraph independence polynomials} \label{sec:independence-polynomial}

\noindent We now demonstrate how Galvin et al.'s \cite[Theorem~1 / Theorem~10]{ref:galvin-et-al} on zero-freeness of hypergraph independence polynomials (Corollary~\ref{cor:galvin-et-al} here) directly follows from our main theorem. We then provide a full proof of Bencs and Buys' refined result, \cite[Theorem~1.1]{ref:bencs-buys} (Theorem~\ref{thm:bencs-buys} here), whose original proof is the inspiration and basis for the constructions of Section~\ref{sec:proof}.

In this section, we assume that $W$ is a pure hard-core interaction in the sense that it and, by extension, $\kappa$ only take the values $0$ and $1$. We can therefore identify $W$ with the \emph{hypergraph}
\[
	\mathfrak h : = \{e \Subset \X \mid W(e) \neq 1\} = \{e \Subset \X \mid W(e) = 0\}.
\]
For proper comparison with \cite{ref:scott-sokal, ref:galvin-et-al, ref:bencs-buys}, we should note that we deviate from the standard terminology in which ``hypergraph'' would refer to the pair $(\X, \mathfrak h)$ so $\mathfrak h$ would just be the set of ``edges''. Furthermore, there are different conventions regarding the admissibility or treatment of empty or singleton edges but we stick with our previous definitions and call
\[
	Z(\Lambda) = \sum_{X \subset \Lambda} z^X \1_{\{\forall e \in \mathfrak h \setminus \{\varnothing\} : e \not \subset X\}}
\]
the \emph{independence polynomial} of $\mathfrak h$ over $\Lambda \Subset \X$.

\begin{remark}
	The independence polynomials are clearly indifferent to the removal of edges that are not minimal in $\mathfrak h \setminus \{\varnothing\}$ with respect to inclusion. While we refrain from incorporating this observation into our argument, let us note that, in the ``mostly irrelevant'' third case of \eqref{eq:W-conditional-def}, our definition of conditional interactions is also meant to accommodate this lack of nuance.
\end{remark}

For $x \in \X$, the role of $\mathbf F(x)$ is now roughly played by the set
\[
	\mathfrak h(x) : = \{e \in \mathfrak h \mid x \in e\} \subset \mathbf F(x)
\]
of edges \emph{incident} to $x$. More precisely, $\mathfrak h(x)$ will play the role of the set $\mathbf E$ introduced in Section~\ref{sec:proof}. The \emph{degree} of $x \in \X$ in $\mathfrak h$ is then
\[
	\deg_\mathfrak h(x) : = |\mathfrak h(x)|
\]
and we need no further preparation to prove the main result of Galvin et.\ al.\ in \cite{ref:galvin-et-al}, see also \cite[Corollary~5.3]{ref:scott-sokal} for purely pairwise hard-core interactions.

\begin{cor}[{\cite[Theorem~10]{ref:galvin-et-al}}] \label{cor:galvin-et-al}
	Let $\Delta \in [1, \infty)$ and suppose that, for all $x \in \X$ with $\{x\} \notin \mathfrak h$,
	\[
		\deg_\mathfrak h(x) \leq \Delta \quad \text{and} \quad |z(x)| \leq  \frac{\Delta^{\deg_\mathfrak h(x)}}{(\Delta + 1)^{\deg_\mathfrak h(x) + 1}}.
	\]
	Then, for all $\Lambda \Subset \X$,
	\[
		|Z(\Lambda)| \geq \left( \frac{\Delta}{\Delta + 1} \right)^{|\Lambda|} > 0.
	\]
\end{cor}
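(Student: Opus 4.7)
The plan is to apply Theorem~\ref{thm:dobrushin-Z-non-zero-finite} directly with the constant choice $r \equiv 1/(\Delta+1)$, so that $\alpha = r/(1-r) \equiv 1/\Delta$, and $\alpha \leq 1$ since $\Delta \geq 1$. The desired lower bound $|Z(\Lambda)| \geq (\Delta/(\Delta+1))^{|\Lambda|}$ then falls out of the conclusion $|Z(\Lambda)| \geq (1-r)^\Lambda$ of the main theorem, so the entire task reduces to verifying the Dobrushin criterion \eqref{eq:dobrushin-criterion} at each site $x \in \X$.

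Next I would exploit the hard-core structure $W(X) \in \{0,1\}$ to evaluate each factor in the product on the left-hand side of \eqref{eq:dobrushin-criterion}. For $X \notin \mathfrak h$ (other than $X = \{x\}$ with $\{x\} \notin \mathfrak h$) the factor equals $1$ because $W(X) - 1 = 0$ and $|W(X)| = 1$. For $X \in \mathfrak h(x)$ with $X \neq \{x\}$, one has $|W(X)| = 0$ and $|W(X) - 1| = 1$, so the factor equals $1 + \max\{\alpha^S : \varnothing \neq S \subset X \setminus \{x\}\}$; here the assumption $\alpha \leq 1$ is what makes the maximum attained by a singleton $S$, yielding $1 + \alpha = (\Delta+1)/\Delta$. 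The case $X = \{x\}$ is the only place requiring a case split: when $\{x\} \notin \mathfrak h$ the factor is $|W(\{x\})| = 1$, while when $\{x\} \in \mathfrak h$ the factor is $|W(\{x\})| = 0$.

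Combining these observations, for every $x$ with $\{x\} \notin \mathfrak h$ the criterion \eqref{eq:dobrushin-criterion} becomes
\[
    |z(x)| \left( \frac{\Delta+1}{\Delta} \right)^{\deg_\mathfrak h(x)} \leq \frac{1}{\Delta+1},
\]
which is precisely the activity bound assumed in the corollary (after rearrangement). For $x$ with $\{x\} \in \mathfrak h$, the product already contains a vanishing factor (from $X = \{x\}$), so the criterion holds trivially and no hypothesis on $z(x)$ is needed. Hence Theorem~\ref{thm:dobrushin-Z-non-zero-finite} applies and gives the claim.

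I do not anticipate any genuine obstacle: the proof is essentially a bookkeeping exercise matching the hard-core interaction to the general inequality, the only substantive point being the $\alpha \leq 1$ reduction of the inner maximum to singletons, which is where the exponent $\deg_\mathfrak h(x)$ (rather than something larger counting subsets) emerges.
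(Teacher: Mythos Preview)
Your proposal is correct and follows essentially the same route as the paper: choose the constant $r = (\Delta+1)^{-1}$ so $\alpha = \Delta^{-1} \leq 1$, use the hard-core structure and $\alpha \leq 1$ to reduce each nontrivial factor in \eqref{eq:dobrushin-criterion} to $1+\alpha$, and observe that the resulting inequality $|z(x)|(1+\alpha)^{\deg_{\mathfrak h}(x)} \leq r$ is exactly the hypothesis. The paper presents the same computation, phrasing the case $\{x\}\in\mathfrak h$ via an indicator rather than a separate case split.
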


\begin{proof}
	Let $x \in \X$ and observe that the requirement \eqref{eq:dobrushin-criterion} now reads
	\[
		|z(x)| \1_{\{\{x\} \notin \mathfrak h\}} \prod_{e \in \mathfrak h(x)} \left( 1 + \max\{\alpha^S \mid \varnothing \neq S \subset e \setminus \{x\}\} \right) \leq r(x).
	\]
	As always, $r : \X \to [0, 1)$ and $\alpha = \frac{r}{1-r} : \X \to \R_+$. Let us now assume that these functions are constant and $\alpha \leq 1$, i.e., $r \leq 1/2$. Then the above inequality further simplifies to
	\[
		|z(x)| \1_{\{\{x\} \notin \mathfrak h\}} \leq r(1-r)^{\deg_\mathfrak h(x)}.
	\]
	Noting that $1 - (\Delta + 1)^{-1} = \frac{\Delta}{\Delta + 1}$, the latter requirement is satisfied with $r = (\Delta + 1)^{-1}$ by the corollary's hypotheses so an application of Theorem~\ref{thm:dobrushin-Z-non-zero-finite} completes the proof.
\end{proof}

Note that the mapping $[0, 1) \times \R_+ \to [0, 1)$, $(r, \Delta) \mapsto r(1-r)^\Delta$, is non-increasing with respect to its second argument and that
\[
	\sup\{ r (1 - r)^\Delta \mid r \in [0, 1/2]\} = r (1 - r)^\Delta|_{r = (\Delta + 1)^{- 1}} = \frac{\Delta^\Delta}{(\Delta + 1)^{\Delta + 1}}.
\]
for every $\Delta \geq 1$. Therefore, the choice $r = (\Delta + 1)^{-1}$ is essentially optimal within the confines of the last proof's argumentation.

\subsection{Further refinement}

There is an even sharper criterion preventing the independence polynomials from vanishing, also based solely on (maximal) vertex degrees. In the case of graphs with only size-two edges, it can be found in \cite[Corollary~5.7]{ref:scott-sokal}. For general hypergraphs, it is the first main result of Bencs and Buys in \cite{ref:bencs-buys}, already acknowledged in \cite{ref:galvin-et-al} and presented here in the following form.

\begin{theorem}[{\cite[Theorem~1.1]{ref:bencs-buys}}] \label{thm:bencs-buys}
	Let $\Delta \in [2, \infty)$ and suppose that, for all $x \in \X$ with $\{x\} \notin  \mathfrak h$,
	\[
		\deg_\mathfrak h(x) \leq \Delta \quad \text{and} \quad |z(x)| \begin{cases}
			< 1 & \text{if } \deg_\mathfrak h(x) = 0, \\
			< \frac{1}{\Delta} & \text{if } \deg_\mathfrak h(x) = 1, \\
			\leq \frac{(\Delta - 1)^{\deg_\mathfrak h(x) - 1}}{\Delta^{\deg_\mathfrak h(x)}} & \text{if } \deg_\mathfrak h(x) \geq 2.
		\end{cases}
	\]
	Then $Z(\Lambda) \neq 0$ for all $\Lambda \Subset \X$.
\end{theorem}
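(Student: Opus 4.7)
The natural first attempt is to apply Theorem~\ref{thm:dobrushin-Z-non-zero-finite} directly, so the plan starts by specialising \eqref{eq:dobrushin-criterion} to the pure hard-core case. Mimicking the proof of Corollary~\ref{cor:galvin-et-al}, assuming $\alpha \leq 1$ everywhere, the maximum in \eqref{eq:dobrushin-criterion} is attained at singletons and the criterion reduces to
\[
	|z(x)| \, \1_{\{\{x\} \notin \mathfrak h\}} \prod_{e \in \mathfrak h(x)} \Bigl( 1 + \max_{y \in e \setminus \{x\}} \alpha(y) \Bigr) \leq r(x).
\]
The simplest constant choice $r(x) = 1/\Delta$, hence $\alpha(x) = 1/(\Delta - 1) \leq 1$, yields the requirement $|z(x)| \leq (\Delta - 1)^{d}/\Delta^{d+1}$ at each vertex of degree $d = \deg_\mathfrak h(x)$. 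This is precisely a factor $\Delta/(\Delta-1)$ short of the Bencs--Buys bound, so Theorem~\ref{thm:dobrushin-Z-non-zero-finite} cannot be invoked as a black box.

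To close this gap, following the approach of \cite{ref:scott-sokal} for standard graphs that the author points to, I would revisit the concluding induction of Section~\ref{sec:proof} with a strengthened inductive hypothesis. The target would be to prove, by induction on $|\Lambda|$, the sharper estimate $|\widehat z(x, \Lambda)| \leq \alpha(x) = 1/(\Delta - 1)$ (rather than the weaker $|\widehat z(x, \Lambda)| \leq r(x) = 1/\Delta$ delivered by Theorem~\ref{thm:dobrushin-Z-non-zero-finite}), working directly from the recursion of Proposition~\ref{prop:step-3-factorisation}. The payoff is visible at the level of each recursive step: since $W(X) - 1 = -1$ everywhere on $\mathfrak h$, each factor in \eqref{eq:step-3-factorisation-total} collapses to $1 - \prod_{x' \in X'} R_X(x', \cdot)$, and the successive conditionings $X'_{\prec x'}$ imposed on the descendants effectively account for the parent hyperedge $X$ at each subsequent $x'$, lowering the local degree budget by one and producing the missing factor $\Delta/(\Delta-1)$, precisely as in the Weitz hypertree construction underlying the original proof of \cite{ref:bencs-buys}. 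The degenerate degrees $d \in \{0, 1\}$ can be handled directly from the strict inequalities $|z(x)| < 1$ and $|z(x)| < 1/\Delta$ assumed in the hypothesis, since isolated sites contribute only a trivial factor $(1 + z(x))$ to the partition function.

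The main obstacle I anticipate is the stability of the strengthened hypothesis under the interpolations $W \leadsto W_X$ and the conditionings $\cdot \mid X'_{\prec x'}$ appearing in Proposition~\ref{prop:step-3-factorisation}: conditioning a pure hard-core interaction on a non-empty $B$ can promote sets $Y \notin \mathfrak h$ with $Y \cup C \in \mathfrak h$ for some $C \subset B$ to new minimal edges of the effective hypergraph, thereby potentially raising raw vertex degrees above $\Delta$. Consequently, analogues of Corollaries~\ref{cor:criterion-conditional-stability} and \ref{cor:criterion-interpolation-stability} adapted to the sharper $\alpha$-bound have to be reproved in the spirit of \cite{ref:bencs-buys}, making precise that only those edges genuinely contributing to a single recursive step count towards the $\Delta$-budget. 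Once this bookkeeping is in place, the induction closes along exactly the same lines as the concluding subsection of Section~\ref{sec:proof}.
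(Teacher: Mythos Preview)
Your overall diagnosis is right and matches the paper's route: Theorem~\ref{thm:dobrushin-Z-non-zero-finite} with constant $r=1/\Delta$ falls short by exactly a factor $\Delta/(\Delta-1)$, and the fix is a direct induction on $|\Lambda|$ using the recursion \eqref{eq:zhat-recursion-hard-core} (the hard-core specialisation of Proposition~\ref{prop:step-3-factorisation}). But the strengthened hypothesis you propose, $|\widehat z(x,\Lambda)| \leq 1/(\Delta-1)$, does \emph{not} close. Feeding it back into the recursion gives $|\widehat z_e/(1+\widehat z_e)| \leq 1/(\Delta-2)$, hence
\[
	|\widehat z(x,\Lambda)| \leq |z(x)|\left(\frac{\Delta-1}{\Delta-2}\right)^{\deg_\mathfrak h(x)} \leq \frac{(\Delta-1)^{2d-1}}{\Delta^d(\Delta-2)^d},
\]
and since $\Delta^d(\Delta-2)^d = ((\Delta-1)^2-1)^d < (\Delta-1)^{2d}$ this exceeds $1/(\Delta-1)$ for every $d\geq 1$; at $\Delta=2$ the bound is not even finite. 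The inductive hypothesis that actually works, and is what the paper uses, is \emph{two-tiered}:
\[
	|\widehat z(x,\Lambda)| < \begin{cases} 1 & \text{if every } e\in\mathfrak h(x)\text{ satisfies } e'\subset\Lambda, \\ \Delta^{-1} & \text{otherwise.} \end{cases}
\]
The point is that each child $\widehat z_e(x',\Lambda\setminus\{x'\}\mid e'_{\prec x'})$ in \eqref{eq:zhat-recursion-hard-core} automatically falls into the second case, because the contracted parent edge $e\setminus e'_{\prec x'}\in(\mathfrak h_e/e'_{\prec x'})(x')$ still contains $x\notin\Lambda$ and so is not fully inside the child's reference volume. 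This gives $|\widehat z_e/(1+\widehat z_e)| < 1/(\Delta-1)$ at every child, whence $|\widehat z(x,\Lambda)| < |z(x)|(\Delta/(\Delta-1))^{|\{e:e'\subset\Lambda\}|}$, which lands back in the right case. Your heuristic about the parent edge ``lowering the degree budget by one'' is essentially this observation, but formalised as a volume condition rather than a degree reduction.

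Your anticipated stability obstacle is not a real problem here. Contraction by $B$ sends $\mathfrak h(x)$ surjectively onto $(\mathfrak h/B)(x)$ via $e\mapsto e\setminus B$ (Lemma~\ref{lem:bencs-buys-conditional-stability}), so $\deg_{\mathfrak h/B}(x)\leq\deg_\mathfrak h(x)$; likewise the interpolation $W\leadsto W_e$ can only merge or delete edges at any given vertex (Corollary~\ref{cor:bencs-buys-recursion-stability}). Degrees never go up, and there is nothing subtle to reprove.
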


Our proof of this result is essentially analogous to our proof of Theorem~\ref{thm:dobrushin-Z-non-zero-finite}. We fix some choice of $x \in \X$ and employ the same implicitly $x$-dependent notation that went into Proposition~\ref{prop:step-3-factorisation}. Using that, in our present framework, $W(e) = \1_{\{e \notin \mathfrak h\}} = 1 - \1_{\{e \in \mathfrak h\}}$ for all $e \Subset \X$, the recursion formula simply reads as follows: Given $\Lambda \Subset \X \setminus \{x\}$,
\be \label{eq:zhat-recursion-hard-core}
	\widehat z(x, \Lambda) = z(x) \1_{\{\{x\} \notin \mathfrak h\}} \prod_{\substack{e \in \mathfrak h(x) : \\ e' \subset \Lambda}} \left( 1 - \prod_{x' \in e'} \frac{\widehat z_e(x', \Lambda \setminus \{x'\} \mid e_{\prec x'}')}{1 + \widehat z_e(x', \Lambda \setminus \{x'\} \mid e_{\prec x'}')} \right),
\ee
provided that some reasonable conditions are satisfied. With indexing similar to that presented at the end of the proof of Proposition~\ref{prop:step-3-factorisation}, Bencs and Buys first derive this recursion formula for linear hypertrees to then reiterate the same argument along the construction of the so-called \emph{Weitz hypertree} of an arbitrary hypergraph, see \cite[Section~2]{ref:bencs-buys}. As previously remarked, we do not use this reduction.

As for the proof of Theorem~\ref{thm:dobrushin-Z-non-zero-finite}, we need to ensure that the criterion in Theorem~\ref{thm:bencs-buys} is compatible with the modifications and conditioning of the interaction going into the right-hand side of \eqref{eq:zhat-recursion-hard-core}. We start with the conditioning aspect and set
\[
	\mathfrak h / B : = \{e \Subset \X \mid W(e \mid B) \neq 1\} = \{e \Subset \X \mid W(e \mid B) = 0\}
\]
for all $B \Subset \X$, noting that the conditional interaction is indeed $\{0, 1\}$-valued again. In fact, it follows directly from our definition of conditional interactions in \eqref{eq:W-conditional-def} that
	\[
		\mathfrak h / B = \{e \setminus B \mid e \in \mathfrak h\} \cup \{\{y\} \mid y \in B\}
	\]
	for all $B \Subset \X$. The notation is taken from \cite[page~70 (``Contraction'')]{ref:galvin-et-al}, Bencs and Buys would write $\mathfrak h \ominus B$ instead (second item in \cite[Definition~2.1]{ref:bencs-buys}). The following is now evident.

\begin{lemma} \label{lem:bencs-buys-conditional-stability}
	Let $B \Subset \X$. Then
	\[
		(\mathfrak h / B)(x) = \begin{cases}
			\{e \setminus B \mid e \in \mathfrak h(x)\} & \text{if } x \notin B, \\
			\{\{x\}\} & \text{if } x \in B.
		\end{cases}
	\]
	In particular, $\{x\} \notin (\mathfrak h / B)(x)$ necessitates both $\deg_{\mathfrak h / B}(x) \leq \deg_\mathfrak h(x)$ and $\{x\} \notin \mathfrak h$.
\end{lemma}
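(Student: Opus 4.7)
The plan is to work directly from the already-established identity
\[
    \mathfrak h / B = \{e \setminus B \mid e \in \mathfrak h\} \cup \{\{y\} \mid y \in B\},
\]
quoted just before the lemma, and simply intersect with the incidence condition $x \in e$, splitting into the two cases $x \notin B$ and $x \in B$.

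In the case $x \notin B$, no singleton $\{y\}$ with $y \in B$ contains $x$, so only the first part of the union can contribute to $(\mathfrak h/B)(x)$. For $e \in \mathfrak h$, the condition $x \in e \setminus B$ collapses to $x \in e$ (since $x \notin B$), i.e.\ to $e \in \mathfrak h(x)$, yielding $(\mathfrak h/B)(x) = \{e \setminus B \mid e \in \mathfrak h(x)\}$. In the case $x \in B$, no set of the form $e \setminus B$ contains $x$, so only the singletons $\{y\}$ with $y \in B$ contribute, and of these only $\{x\}$ is incident to $x$; hence $(\mathfrak h/B)(x) = \{\{x\}\}$.

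For the ``in particular'' clause, assume $\{x\} \notin (\mathfrak h/B)(x)$. The second case above would force $\{x\} \in (\mathfrak h/B)(x)$, so we must have $x \notin B$, whence $(\mathfrak h/B)(x) = \{e \setminus B \mid e \in \mathfrak h(x)\}$. The surjection $e \mapsto e \setminus B$ from $\mathfrak h(x)$ onto $(\mathfrak h/B)(x)$ then immediately gives $\deg_{\mathfrak h/B}(x) \leq \deg_\mathfrak h(x)$. Finally, if $\{x\}$ were in $\mathfrak h$, then $\{x\} \in \mathfrak h(x)$ and $\{x\} \setminus B = \{x\}$ (using $x \notin B$) would place $\{x\}$ in $(\mathfrak h/B)(x)$, contradicting the hypothesis; so $\{x\} \notin \mathfrak h$.

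There is no genuine obstacle here — the lemma is essentially bookkeeping, and the only subtle point is the asymmetry introduced when $x \in B$, which is already made explicit by the piecewise statement. The entire argument rests on the contraction formula, whose derivation from \eqref{eq:W-conditional-def} was already indicated in the paragraph preceding the lemma.
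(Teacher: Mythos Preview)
Your proof is correct and follows exactly the approach the paper intends: the paper does not spell out a proof at all, merely declaring the lemma ``evident'' from the contraction identity $\mathfrak h / B = \{e \setminus B \mid e \in \mathfrak h\} \cup \{\{y\} \mid y \in B\}$, and your argument is precisely the routine case-split verification one would write to unpack that remark.
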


To address the compatibility with our ``modifications'', we set
	\[
		\mathfrak h_e : = \{b \Subset \X \mid W_e(b) \neq 1\} = \{b \Subset \X \mid W_e(b) = 0\}
	\]
	or, more explicitly,
	\[
		\mathfrak h_e = \{b' \mid e \succ b \in \mathfrak h(x)\} \cup (\mathfrak h \setminus (\mathfrak h(x) \setminus \{e\}))
	\]
	for all $e \in \mathfrak h(x)$. Together with Lemma~\ref{lem:bencs-buys-conditional-stability}, we arrive at the following, which plays a role analogous to that of Corollaries~\ref{cor:criterion-conditional-stability} and \ref{cor:criterion-interpolation-stability}.

\begin{cor} \label{cor:bencs-buys-recursion-stability}
	Let $e \in \mathfrak h(x)$, $x' \in e' = e \setminus \{x\}$ and $y \in \X$ such that $\{y\} \notin \mathfrak h_e / e_{\prec x'}'$. Then
	\[
		\deg_{\mathfrak h_e / e_{\prec x'}'}(y) \leq \deg_\mathfrak h(y)
	\]
	and, if additionally $x \neq y$ or $\{x\} \notin \mathfrak h$, then also $\{y\} \notin \mathfrak h$.
\end{cor}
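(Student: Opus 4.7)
The plan is to unwrap the explicit descriptions of $\mathfrak h_e$ and of the contraction by $B := e'_{\prec x'}$ recalled just before the corollary, and then produce a set injection $\mathfrak h_e(y) \hookrightarrow \mathfrak h(y)$. Note $x \notin B$ since $B \subset e'$. First, the hypothesis $\{y\} \notin \mathfrak h_e / B$ combined with the second case of Lemma~\ref{lem:bencs-buys-conditional-stability} (applied to $\mathfrak h_e$) forces $y \notin B$. The first case of that same lemma then gives $(\mathfrak h_e/B)(y) = \{b \setminus B \mid b \in \mathfrak h_e(y)\}$, so $\deg_{\mathfrak h_e/B}(y) \leq |\mathfrak h_e(y)|$ and it suffices to prove $|\mathfrak h_e(y)| \leq \deg_\mathfrak h(y)$.

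For the injection, I would plug in the identity $\mathfrak h_e = \{b' \mid e \succ b \in \mathfrak h(x)\} \cup (\mathfrak h \setminus (\mathfrak h(x) \setminus \{e\}))$ and define $\phi : \mathfrak h_e(y) \to \mathfrak h(y)$ by $\phi(a) := a$ if $a \in \mathfrak h$ and $\phi(a) := a \cup \{x\}$ otherwise. Any $a \in \mathfrak h_e(y) \setminus \mathfrak h$ must then come from the first constituent, i.e.\ $a = b'$ with $b \in \mathfrak h(x)$, $b \prec e$ and $y \in b$, giving $\phi(a) = b \in \mathfrak h(y)$; the other case is tautological. Injectivity reduces to a single potentially delicate cross-case, $a_1 \in \mathfrak h$ and $a_2 \notin \mathfrak h$ with $a_1 = a_2 \cup \{x\}$. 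The key observation is that the $x$-containing edges of $\mathfrak h_e$ are exactly $\{e\}$: the first constituent is entirely $x$-free, and the second intersects $\mathfrak h(x)$ only in $\{e\}$. Hence $a_1 = e$ and $a_2 = e'$, contradicting $a_2 = b'$ for some $b \prec e$.

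Finally, for the singleton claim, assume $\{y\} \in \mathfrak h$ by contradiction. If $x \neq y$, then $\{y\} \notin \mathfrak h(x)$, so $\{y\} \in \mathfrak h \setminus (\mathfrak h(x) \setminus \{e\}) \subset \mathfrak h_e$, and since $y \notin B$ the singleton survives contraction, yielding $\{y\} \in \mathfrak h_e / B$ against the hypothesis. If instead $y = x$ and $\{x\} \notin \mathfrak h$, then $\{y\} = \{x\} \notin \mathfrak h$ directly contradicts the assumption. The main obstacle throughout is the case analysis underlying injectivity of $\phi$: one must keep the two constituents of $\mathfrak h_e$ cleanly separated, and this depends precisely on the classification of $x$-containing edges of $\mathfrak h_e$ worked out above.
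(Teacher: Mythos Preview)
Your proof is correct and follows essentially the same route as the paper's: apply Lemma~\ref{lem:bencs-buys-conditional-stability} to $\mathfrak h_e$ to reduce the degree bound to $\deg_{\mathfrak h_e}(y) \leq \deg_\mathfrak h(y)$, then argue the latter from the explicit description of $\mathfrak h_e$. The paper is simply terser, asserting that last inequality ``follows from the explicit form of $\mathfrak h_e$'' without writing out the injection $\phi$ you construct; your explicit cross-case analysis (resting on $\mathfrak h_e \cap \mathfrak h(x) = \{e\}$) is exactly the content behind that phrase. For the singleton claim the paper packages your two cases into the single observation $\{y\} \in \mathfrak h \setminus \mathfrak h_e \subset \mathfrak h(x) \setminus \{e\}$, using that Lemma~\ref{lem:bencs-buys-conditional-stability} already gives $\{y\} \notin \mathfrak h_e$, but this is the same argument.
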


\begin{proof}
	Lemma~\ref{lem:bencs-buys-conditional-stability} yields
	\[
		\{y\} \notin \mathfrak h_e \quad \text{and} \quad \deg_{\mathfrak h_e / e_{\prec x'}'}(y) \leq \deg_{\mathfrak h_e}(y) \leq \deg_\mathfrak h(y),
	\]
	where the last inequality follows from the explicit form of $\mathfrak h_e$. Also,
	\[
		\{y\} \in \mathfrak h \setminus \mathfrak h_e = \mathfrak h(x) \setminus \{e\}
	\]
	is only possible if $x = y$ and $\{x\} \in \mathfrak h$.
\end{proof}

We now prove Theorem~\ref{thm:bencs-buys} by an induction that is additionally inspired by the proof of \cite[Corollary~5.7]{ref:scott-sokal}.

\begin{proof}[Proof of Theorem~\ref{thm:bencs-buys}]
	Taking a quick look back at the proof of Theorem~\ref{thm:dobrushin-Z-non-zero-finite}, we only need to show that, for some arbitrary choice of $\Lambda \Subset \X \setminus \{x\}$ with $Z(\Lambda) \neq 0$, the effective activity $\widehat z(x, \Lambda)$ has an absolute value smaller than $1$. In fact, we are going to prove
	\[
		|\widehat z(x, \Lambda)| < \begin{cases}
			1 & \text{if } e' = e \setminus \{x\} \subset \Lambda \text{ for all } e \in \mathfrak h(x), \\
			\Delta^{-1} & \text{otherwise}.
		\end{cases}
	\]
	Our main tool is the recursion formula \eqref{eq:zhat-recursion-hard-core},
	\[
		\widehat z(x, \Lambda) = z(x) \1_{\{\{x\} \notin \mathfrak h\}} \prod_{\substack{e \in \mathfrak h(x) : \\ e' \subset \Lambda}} \left( 1 - \prod_{x' \in e'} \frac{\widehat z_e(x', \Lambda \setminus \{x'\} \mid e_{\prec x'}')}{1 + \widehat z_e(x', \Lambda \setminus \{x'\} \mid e_{\prec x'}')} \right),
	\]
	whose validity, see Proposition~\ref{prop:step-3-factorisation}, is again trivial or follows inductively.
	
	If $z(x) = 0$ or $\{x\} \in \mathfrak h$, then our induction step immediately concludes with $\widehat z(x, \Lambda) = 0$. The cases where $\{e \in \mathfrak h(x) \mid e' \subset \Lambda\} = \varnothing$ and, in particular, $\{x\} \notin \mathfrak h$ are resolved by the hypotheses of Theorem~\ref{thm:bencs-buys}, namely
	\[
		|\widehat z(x, \Lambda)| = |z(x)| \begin{cases}
			< 1 & \text{if } \deg_\mathfrak h(x) = 0, \\
			< \Delta^{-1} & \text{if } \deg_\mathfrak h(x) = 1, \\
			\leq \frac{(\Delta - 1)^{\deg_\mathfrak h(x) - 1}}{\Delta^{\deg_\mathfrak h(x)}} \leq \frac{\Delta - 1}{\Delta^2} < \Delta^{-1} & \text{if } \deg_\mathfrak h(x) \geq 2.
		\end{cases}
	\]
	Hence, we now assume $z(x) \neq 0$ and $\{x\} \notin \mathfrak h$ as well as $\{e \in \mathfrak h(x) \mid e' \subset \Lambda\} \neq \varnothing$. Briefly fixing $e \in \mathfrak h(x)$ and $x' \in e' \neq \varnothing$, observe that, by Corollary~\ref{cor:bencs-buys-recursion-stability},
	\[
		\{y\} \notin \mathfrak h \quad \text{and} \quad \deg_{\mathfrak h_e / e_{\prec x'}'}(y) \leq \deg_\mathfrak h(y)
	\]
	for all $y \in \X$ such that $\{y\} \notin \mathfrak h_e / e_{\prec x'}'$, so we can now inductively assume
	\[
		Z_e(\Lambda \setminus \{x'\} \mid e_{\prec x'}') \neq 0 \quad \text{and} \quad |\widehat z_e(x', \Lambda \setminus \{x'\} \mid e_{\prec x'}')| < \Delta^{-1},
	\]
	where, for the latter bound, we used that the edge
	\[
		e \setminus e_{\prec x'}' = \{x\} \cup \{x'' \in e' \mid x' \preceq x''\} \in (\mathfrak h_e / e_{\prec x'}')(x')
	\]
	is not fully contained in $\Lambda = \{x'\} \cup (\Lambda \setminus \{x'\})$. Using the resulting bound
	\[
		\left| \frac{\widehat z_e(x', \Lambda \setminus \{x'\} \mid e_{\prec x'}')}{1 + \widehat z_e(x', \Lambda \setminus \{x'\} \mid e_{\prec x'}')} \right| < \frac{\Delta^{-1}}{1 - \Delta^{-1}} = (\Delta - 1)^{-1}
	\]
	in each factor of the following inner product, we now have
	\[
		\prod_{\substack{e \in \mathfrak h(x) : \\ e' \subset \Lambda}} \left| 1 - \prod_{x' \in e'} \frac{\widehat z_e(x', \Lambda \setminus \{x'\} \mid e_{\prec x'}')}{1 + \widehat z_e(x', \Lambda \setminus \{x'\} \mid e_{\prec x'}')} \right|  < \prod_{\substack{e \in \mathfrak h(x) : \\ e' \subset \Lambda}} \left( 1 + (\Delta - 1)^{-|e'|} \right)
	\]
	where we stress the strictness of the first inequality, due to the assumed existence of $e \in \mathfrak h(x)$ with $\varnothing \neq e' \subset \Lambda$. Recalling that we also currently assume $z(x) \1_{\{\{x\} \notin \mathfrak h\}} \neq 0$, we obtain from \eqref{eq:zhat-recursion-hard-core} that
	\[
		|\widehat z(x, \Lambda)| < |z(x)| \prod_{\substack{e \in \mathfrak h(x) : \\ e' \subset \Lambda}} \left( 1 + (\Delta - 1)^{-|e'|} \right) \leq |z(x)| \left( \frac{\Delta}{\Delta - 1} \right)^{|\{e \in \mathfrak h(x) \mid e' \subset \Lambda\}|}.
	\]
	Inserting the hypotheses of Theorem~\ref{thm:bencs-buys} in the form
	\[
		|z(x)| \leq \left( \frac{\Delta - 1}{\Delta} \right)^{|\mathfrak h(x)|} (\Delta - 1)^{-1} = \left( \frac{\Delta - 1}{\Delta} \right)^{|\mathfrak h(x)| - 1} \Delta^{-1}
	\]
	then leads to
	\[
		|\widehat z(x, \Lambda)| < \begin{cases}
			(\Delta - 1)^{-1} \leq 1 & \text{if } e' \subset \Lambda \text{ for all } e \in \mathfrak h(x), \\
			\Delta^{-1} & \text{otherwise},
		\end{cases}
	\]
	which completes the derivation of the bound claimed at the beginning of the proof. In all of the considered cases, we can finally infer
	\[
		Z(\{x\} \cup \Lambda) = Z(\Lambda) (1 + \widehat z(x, \Lambda)) \neq 0
	\]
	via the fundamental identity \eqref{eq:fundamental-identity-Z(1+zhat)} and conclude that Theorem~\ref{thm:bencs-buys} follows by induction on $\Lambda$.
\end{proof}

\begin{remark}
	A crucial ingredient to the proof of Theorem~\ref{thm:bencs-buys} is the observation that the effective activities $\widehat z_e(x', \Lambda \setminus \{x'\} \mid e_{\prec x'}')$ on the right-hand side of \eqref{eq:zhat-recursion-hard-core} are based on hypergraph interactions where at least one edge incident to the root $x'$, namely $e \setminus e_{\prec x'}'$, cannot contribute by virtue of containing the site $x \notin \Lambda$. One could, in principle, refine our main result, Theorem~\ref{thm:dobrushin-Z-non-zero-finite}, by an analogous observation in the general recursion formula \eqref{eq:step-3-factorisation-total} but, without additional control of the interaction values $\{W(X) \mid X \in \mathbf F(x)\} \setminus \{1\}$, this seems somewhat pointless.
\end{remark}

\section{Alternative approach} \label{sec:alternative}

\noindent The final section of this article is dedicated to the derivation of our criterion from the Kirkwood--Salsburg hierarchy for lattice gas correlations. Although this hierarchy is a classical tool, let us stress the significant non-standard aspects of our treatment. Firstly, our unconventional definition of correlations, see Section~\ref{sec:preliminaries}, persists here for its convenient connection to effective activities.

Secondly, focusing on a Picard iteration for finite lattice reference volumes, we forego the classical framing (e.g., in \cite{ref:gallavotti-miracle-sole, ref:gallavotti-miracle-sole-robinson, ref:ruelle, ref:pastur, ref:brascamp}) in explicitly functional-analytic terms. Instead, we adapt the ansatz in \cite{ref:jansen-cluster, ref:jansen-kolesnikov} by stripping it of its cluster expansion angle. In particular, we need neither introduce any norms on Banach spaces nor argue for strict contractivity of any operator. The relevant properties of solutions to finite-volume Kirkwood--Salsburg equations, such as uniqueness and their relation to partition functions, are treated in an ad hoc fashion.

Thirdly and most importantly, the proof that the hypothesis of our main result, Theorem~\ref{thm:dobrushin-Z-non-zero-finite}, implies the existence of appropriately bounded solutions of the Kirkwood--Salsburg hierarchy hinges on a partition scheme for coverings of finite sets, reminiscent of the partition schemes behind so-called tree-graph (in)equalities, see, e.g., \cite{ref:sokal, ref:scott-sokal, ref:poghosyan-ueltschi, ref:jansen-cluster}. It is precisely this partition scheme that allows us to refine Gallavotti and Miracle-Sol{\'e}'s bounds for the Kirkwood--Salsburg operator in \cite{ref:gallavotti-miracle-sole} (the actual comparison is in Section~\ref{sec:results} above). These bounds are also found in \cite[Subsection~4.2.6]{ref:ruelle}, including the slight modification of the hierarchy made by Gallavotti, Miracle-Sol{\'e} and Robinson in \cite{ref:gallavotti-miracle-sole-robinson}.

\subsection{The Kirkwood--Salsburg hierarchy}

The titular hierarchy encompasses a set of linear identities for correlations over a common reference volume. The relevant linear operators are encoded in the kernel function $\gamma : \X \times \mathbf F \times \mathbf F \to \C$,
\[
	(s, N, B) \mapsto \gamma(s, N \mid B) : = \sum_{M \subset N} (-1)^{|N \setminus M|} \kappa(s \mid B \cup M),
\]
where we can technically ignore the case $(\{s\} \cup N) \cap B \neq \varnothing$. We also ignore the possible extension in the first argument as in \cite[Eq.~(11) and Table~1]{ref:brascamp}. As usual, we may omit an empty last argument of $\gamma$ from our notation. We provide a more explicit characterisation of $\gamma$ in terms of $W$ further below but the Kirkwood--Salsburg hierarchy as such simply hinges on the fact that, by M{\"o}bius inversion,
\be \label{eq:kappa-sum-gamma}
	\kappa(s \mid B \cup M) = \sum_{N \subset M} \gamma(s, N \mid B)
\ee
for all $s \in \X$, $M, B \Subset \X$. This implies that, given any $\Lambda \Subset \X$, 
\begin{align*}
	Z(\{s\} \cup T, \Lambda) & = \sum_{M \subset \Lambda \setminus T} z(s) \kappa(s \mid T \cup M) z^{T \cup M} \kappa(T \cup M) \\
	& = \sum_{N \subset \Lambda \setminus T} z(s) \gamma(s, N \mid T) \sum_{L \subset \Lambda \setminus (T \cup N)} z^{T \cup N \cup L} \kappa(T \cup N \cup L) \\
	& = \sum_{N \subset \Lambda \setminus T} z(s) \gamma(s, N \mid T) Z(T \cup N, \Lambda) 
\end{align*}
holds for all $s \in \X$, $T \Subset \X \setminus \{s\}$, and, provided that additionally $Z(\Lambda) \neq 0$, we immediately get
\be \label{eq:R-KS-MM-hierarchy}
	R(\{s\} \cup T, \Lambda) = \sum_{N \subset \Lambda \setminus T} z(s) \gamma(s, N \mid T) R(T \cup N, \Lambda).
\ee
This is the \emph{Kirkwood-Salsburg (KS) equation} and we need not even talk about thermodynamic limits or phase transitions in order to justify its use within this article. See also the subsequent remark.

\begin{prop} \label{prop:KS-uniqueness}
	Let $\Lambda \Subset \X$ and suppose that $\rho : \mathbf F \to \C$ satisfies
\be \label{eq:rho-ks-hierarchy}
	\rho(\{s\} \cup T) = \sum_{N \subset \Lambda \setminus T} z(s) \gamma(s, N \mid T) \rho(T \cup N)
\ee
for all $s \in \X$ and $T \Subset \X \setminus \{s\}$. Then there exists $\mu(\varnothing) \in \C$ such that $\rho(X) = Z(X, \Lambda) \mu(\varnothing)$ for all $X \Subset \X$. In particular, if $\rho(\varnothing) \neq 0$, then  $Z(\Lambda) \neq 0$ as well and
\[
	R(X, \Lambda) = \frac{Z(X, \Lambda)}{Z(\Lambda)} = \frac{\rho(X)}{\rho(\varnothing)}
\]
for all $X \Subset \X$.
\end{prop}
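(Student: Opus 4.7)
The plan is to prove the identity $\rho(X) = Z(X,\Lambda)\mu(\varnothing)$ by Möbius-transforming $\rho$ on the finite poset $\{U\subset\Lambda\}$ and extracting a first-order recursion. I first observe that it suffices to treat the case $X\subset\Lambda$: for $X\not\subset\Lambda$, applying \eqref{eq:rho-ks-hierarchy} with root $s\in X\setminus\Lambda$ and $T = X\setminus\{s\}$ expresses $\rho(X)$ in terms of $\rho$-values on sets whose outside-$\Lambda$ part has cardinality one less. Since $Z(\cdot,\Lambda)$ satisfies the same recursion (this is exactly the Kirkwood--Salsburg equation derived just above the proposition), once the identity is verified on $\{X\subset\Lambda\}$ an induction on $|X\setminus\Lambda|$ propagates it to all of $\mathbf F$.

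For $U\subset\Lambda$ set
\[
  \mu(U) := \sum_{L\subset\Lambda\setminus U}(-1)^{|L|}\rho(U\cup L),
  \qquad
  \text{so that}
  \qquad
  \rho(X) = \sum_{X\subset U\subset\Lambda}\mu(U).
\]
Next, substitute the definition of $\gamma$ into \eqref{eq:rho-ks-hierarchy} for $s\in\Lambda$, $T\subset\Lambda\setminus\{s\}$, swap the nested summations $\sum_N\sum_{M\subset N}$ to $\sum_M\sum_{N\supset M}$, and change variables $U := T\cup M$. Terms with $s\in M$ (equivalently $s\in U$) disappear since $\kappa(s\mid U) = 0$ there, and collecting what remains reduces the KS equation to
\[
  \sum_{T\subset U\subset\Lambda}\bigl[\1_{\{s\in U\}} - \1_{\{s\notin U\}}\,z(s)\kappa(s\mid U)\bigr]\mu(U) = 0.
\]
The heart of the argument is now a double Möbius inversion of this family of linear relations: inverting first over $T\subset\Lambda\setminus\{s\}$ replaces the constraint $U\supset T$ by $U\cap T_0 = \varnothing$, and inverting a second time over the complementary set $A = \Lambda\setminus(T_0\cup\{s\})$ collapses the whole family to the pointwise recursion $\mu(V\cup\{s\}) = z(s)\kappa(s\mid V)\mu(V)$ for every $s\in\Lambda$ and $V\subset\Lambda\setminus\{s\}$.

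Once this is in hand the rest is routine. Iterating the recursion along any total order on $\Lambda$ and invoking the conditional multiplicativity \eqref{eq:kappa-conditionally-multiplicative} of $\kappa$ gives $\mu(X) = z^X\kappa(X)\mu(\varnothing)$ on $\{X\subset\Lambda\}$. Plugging this back into $\rho(X) = \sum_{X\subset U\subset\Lambda}\mu(U)$ reproduces the definition of $Z(X,\Lambda)$ and shows $\rho(X) = \mu(\varnothing) Z(X,\Lambda)$ for $X\subset\Lambda$; the induction announced above then extends the identity to all $X\Subset\X$. The ``in particular'' clause follows at once from $\rho(\varnothing) = \mu(\varnothing)Z(\Lambda)$: if $\rho(\varnothing)\neq 0$ then $Z(\Lambda)\neq 0$, and dividing yields $R(X,\Lambda) = Z(X,\Lambda)/Z(\Lambda) = \rho(X)/\rho(\varnothing)$. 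I expect the main technical hurdle to be the clean execution of the double Möbius inversion---in particular, keeping track of when the condition $s\notin M$ must be imposed by hand versus when it is forced by the vanishing of $\kappa(s\mid U)$.
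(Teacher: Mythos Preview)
Your proof is correct and follows essentially the same strategy as the paper: M\"obius-transform $\rho$ to $\mu$, derive the first-order recursion $\mu(\{s\}\cup V)=z(s)\kappa(s\mid V)\mu(V)$, and iterate. The paper streamlines this slightly by defining $\mu$ on all of $\mathbf F$ at once (thereby avoiding your separate induction on $|X\setminus\Lambda|$) and by reaching the recursion more directly---it expands $\rho(T\cup N)$ in terms of $\mu$ first and collapses the $\gamma$-sum via \eqref{eq:kappa-sum-gamma}, obtaining $\sum_{Y}\mu(\{s\}\cup T\cup Y)=\sum_{Y}z(s)\kappa(s\mid T\cup Y)\mu(T\cup Y)$, from which a single M\"obius inversion suffices---but these are cosmetic variations of the same argument.
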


\begin{proof}
	Consider the weight function
	\[
		\mu : \mathbf F \to \C, \quad X \mapsto \sum_{Y \subset \Lambda \setminus X} (-1)^{|Y|} \rho(X \cup Y),
	\]
	which, by the inclusion-exclusion principle, is also characterised by prescribing $\sum_{Y \subset \Lambda \setminus X} \mu(X \cup Y) = \rho(X)$ for every $X \Subset \X$. The identity
	\be \label{eq:mu-gnz}
		\mu(\{s\} \cup T) = z(s) \kappa(s \mid T) \mu(T)
	\ee
	then likewise follows for every choice of $s \in \X$ and $T \Subset \X \setminus \{s\}$ from
	\begin{align*}
		\sum_{Y \subset \Lambda \setminus (\{s\} \cup T)} \mu(\{s\} \cup T \cup Y) & = \sum_{N \subset \Lambda \setminus T} z(s) \gamma(s, N \mid T) \sum_{\substack{M \subset \Lambda \setminus T : \\ N \subset M}} \mu(T \cup M) \\
		& = \sum_{M \subset \Lambda \setminus T} z(s) \sum_{N \subset M} \gamma(s, N \mid T) \mu(T \cup M) \\
		& = \sum_{M \subset \Lambda \setminus T} z(s) \kappa(s \mid T \cup M) \mu(T \cup M) \\
		& = \sum_{Y \subset \Lambda \setminus (\{s\} \cup T)} z(s) \kappa(s \mid T \cup Y) \mu(T \cup Y),
	\end{align*}
	where the first identity is just \eqref{eq:rho-ks-hierarchy} and the second to last equality follows from \eqref{eq:kappa-sum-gamma} (cf.\ also \cite[Eq.~(17)]{ref:brascamp}). Using \eqref{eq:kappa-conditionally-multiplicative}, we can now inductively apply \eqref{eq:mu-gnz} to obtain
	\[
		\mu(X) = z^X \kappa(X) \mu(\varnothing) \quad \text{and} \quad \rho(X) = \sum_{Y \subset \Lambda \setminus X} \mu(X \cup Y) = Z(X, \Lambda) \mu(\varnothing)
	\]
	for every $X \Subset \X$, which completes the proof.
\end{proof}

\begin{remark}
	The above proof illustrates the equivalence between solutions of the full set of KS equations and multiples of the Boltzmann--Gibbs distribution in finite volumes. The effect of our unconventional framework for correlations on the KS hierarchy can also be interpreted as follows: a solution to the KS equations here solves the usual KS equations for all finite boundary conditions simultaneously and the compatibility across distinct boundary conditions forces a consistent weighting proportional to their respective Boltzmann factors.
\end{remark}

\subsection{Fixed point problem and Picard iteration}

Fixing an arbitrary but persistent choice of $\Lambda \Subset \X$, it is common to facilitate the analysis of the KS hierarchy by condensing it into a fixed point equation of the form
\[
	\rho = K_\Lambda \rho
\]
with $\rho : \mathbf F \to \C$. To that end, we first fix an arbitrary \emph{selector} by which we mean a map $s : \mathbf F \setminus \{\varnothing\} \to \X$ with $s_X : = s(X) \in X$ and $X_s' : = X \setminus \{s_X\}$ for all non-empty $X \Subset \X$. Our version of the KS operator $K_\Lambda$ is then given by
\[
	K_\Lambda \rho : X \mapsto \begin{cases}
		\rho(\varnothing) & \text{if } X = \varnothing, \\
		\sum_{N \subset \Lambda \setminus X_s'} z(s_X) \gamma(s_X, N \mid X_s') \rho(X_s' \cup N) & \text{if } X \neq \varnothing,
	\end{cases}
\]
for all $\rho : \mathbf F \to \C$.

In view of Proposition~\ref{prop:KS-uniqueness}, we are primarily interested in solving the KS equations with a function $\rho : \mathbf F \to \C$ satisfying $\rho(\varnothing) \neq 0$ or, more specifically and without loss, $\rho(\varnothing) = 1$. This restriction of the KS equations is equivalent to their usual formulation as an inhomogeneous linear equation. Our ansatz works with the following Picard iteration one might expect from an approach based on the Banach fixed point theorem: starting with $\rho_0 : = \1_{\{\varnothing\}} : \mathbf F \to \C$, we define $\rho_{n+1} : = K_\Lambda \rho_n$ for all $n \in \N_0$.

\begin{lemma} \label{lem:picard-existence-independence-finite}
	The sequence $(\rho_n)_{n \in \N_0}$ is independent of the choice of the selector $s$ with $\rho_n(\varnothing) = 1$ for every $n \in \N_0$.
\end{lemma}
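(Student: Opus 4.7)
The $\rho_n(\varnothing)=1$ part is immediate: $\rho_0(\varnothing)=1$ by definition, and the first case in the definition of $K_\Lambda$ forces $\rho_{n+1}(\varnothing) = \rho_n(\varnothing)$, so a one-line induction suffices.

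For the selector-independence, my plan is a simultaneous induction on $n$ with a strengthened hypothesis. I would introduce the \emph{exchange property} $(\star_n)$: for every nonempty $X \in \mathbf F$ and every pair $x, y \in X$,
\[
\sum_{N \subset \Lambda \setminus (X \setminus \{x\})} z(x)\, \gamma(x, N \mid X \setminus \{x\})\, \rho_n((X \setminus \{x\}) \cup N) = \sum_{N \subset \Lambda \setminus (X \setminus \{y\})} z(y)\, \gamma(y, N \mid X \setminus \{y\})\, \rho_n((X \setminus \{y\}) \cup N).
\]
By construction of $K_\Lambda$, $(\star_n)$ is precisely the assertion that $(K_\Lambda^s \rho_n)(X)$ does not depend on which element $s_X \in X$ is picked by $s$; it is therefore equivalent to selector-independence of the next iterate $\rho_{n+1}$.

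The base case $(\star_0)$ is clear from $\rho_0 = \1_{\{\varnothing\}}$: both sides vanish whenever $|X| \geq 2$, since $(X \setminus \{x\}) \cup N$ is then always nonempty, and for $|X| = 1$ the statement is vacuous. For the inductive step, assuming selector-independence of $\rho_n$ and $(\star_n)$, selector-independence of $\rho_{n+1}$ is a direct consequence, and to propagate the exchange property to $(\star_{n+1})$ I would expand each $\rho_{n+1}(\cdots)$ in the two sides via $\rho_{n+1} = K_\Lambda \rho_n$ (using any convenient selector, which is legitimate by the just-established selector-independence at level $n+1$). Applying the M\"obius identity \eqref{eq:kappa-sum-gamma} that underlies the Kirkwood--Salsburg equation and the conditional multiplicativity \eqref{eq:kappa-conditionally-multiplicative} of $\kappa$, one can rearrange the resulting nested double sum into a form in which the exchange invariance is inherited from $(\star_n)$.

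The main obstacle I anticipate is this last rearrangement: the $N$-sums coming from the outer and inner applications of the recursion must be re-indexed so that the comparison of the $x$-anchored and $y$-anchored expressions reduces cleanly to $(\star_n)$ applied to the inner configurations. This is a combinatorial identity involving products of $\gamma$-kernels on nested boundary conditions, and the specific starting value $\rho_0 = \1_{\{\varnothing\}}$ is essential — it ensures no selector-dependent asymmetry is injected at the outset, so that every step of the recursion has a chance of preserving the symmetry that $(\star_n)$ encodes.
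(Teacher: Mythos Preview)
Your handling of $\rho_n(\varnothing)=1$ is fine. The selector-independence argument, however, has a real gap at the inductive step, and your description of how $(\star_{n+1})$ would follow is misleading. After the double expansion you outline---selecting $y$ inside the $x$-anchored side and $x$ inside the $y$-anchored side---both expressions are sums of the form $z(x)z(y)\sum_{N,M}\gamma(x,N\mid T\cup\{y\})\gamma(y,M\mid T\cup N)\,\rho_n(T\cup N\cup M)$ and its $x\leftrightarrow y$ counterpart, with $T=X\setminus\{x,y\}$. Equality here is \emph{not} something that reduces to $(\star_n)$; rather, it is a kernel identity: the coefficient of each fixed $\rho_n(T\cup P)$ agrees on both sides. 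One way to see this is to sum over $P\subset Q$ and use \eqref{eq:kappa-sum-gamma} twice together with \eqref{eq:kappa-conditionally-multiplicative} to collapse the double-$\gamma$ sum to $\kappa(\{x,y\}\mid T\cup Q)$, which is manifestly symmetric; M\"obius inversion then gives the coefficient identity. So the tools you name are the right ones, but the role of $(\star_n)$ is only to legitimise the choice of inner selector for $\rho_{n+1}$---the symmetry itself comes from the kernel, not from the previous exchange property. You have not carried this out, and you flag it yourself as ``the main obstacle''.

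The paper takes a cleaner and quite different route: it shows by induction that each $\rho_n$ has the explicit structural form
\[
\rho_n(X)=\sum_{Y\subset\Lambda\setminus X} m_n(|X\cup Y|)\,z^{X\cup Y}\kappa(X\cup Y)
\]
for a sequence $m_n:\N_0\to\C$ that depends only on cardinalities (with $m_0=\1_{\{0\}}$ and $m_{n+1}(k)=m_n(k-1)$ for $k\ge1$). The verification of the inductive step again uses \eqref{eq:kappa-sum-gamma} and \eqref{eq:kappa-conditionally-multiplicative}, but now the selector-independence is \emph{manifest} from the formula since the selected site $s_X$ never appears. This avoids the two-step kernel commutation entirely and yields the result with a single, transparent computation.
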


\begin{proof}
	We basically follow \cite[Lemma~2]{ref:pastur}. With a nod to the proof of Proposition~\ref{prop:KS-uniqueness}, we inductively assume
	\[
		\rho_n(X) = \rho_n(X) \1_{\{|X| \leq n\}} = \sum_{Y \subset \Lambda \setminus X} m_n(|X \cup Y|) z^{X \cup Y} \kappa(X \cup Y)
	\]
	for a given $n \in \N_0$ and all $X \Subset \X$, where $m_n = m_n \1_{\{\bullet \leq n\}} : \N_0 \to \C$ does not depend on the choice of $s$. For $n = 0$, this is obviously the case with $m_0 = \1_{\{\bullet = 0\}}$. Now, for non-empty $X \Subset \X$,
	\begin{align*}
		\rho_{n+1}(X) & = \sum_{N \subset \Lambda \setminus X_s'} z(s_X) \gamma(s_X, N \mid X_s') \rho_n(X_s' \cup N) \\
		& = \sum_{N \subset \Lambda \setminus X_s'} z(s_X) \gamma(s_X, N \mid X_s') \\
		& \quad \quad \quad \times \sum_{\substack{M \subset \Lambda \setminus X_s' : \\ N \subset M}} m_n(|X_s' \cup M|) z^{X_s' \cup M} \kappa(X_s' \cup M) \\
		& = \sum_{M \subset \Lambda \setminus X_s'} z(s_X) \kappa(s_X \mid X_s' \cup M) m_n(|X_s' \cup M|) z^{X_s' \cup M} \kappa(X_s' \cup M) \\
		& = \sum_{Y \subset \Lambda \setminus X} m_n(|X \cup M| - 1) z^{X \cup Y} \kappa(X \cup Y),
	\end{align*}
	where the first three identities are respectively due to $\rho_{n+1} = K_\Lambda \rho_n$, the induction hypothesis and \eqref{eq:kappa-sum-gamma}. It is easy to see that the induction hypothesis is adequately propagated with $m_{n+1} = m_{n+1} \1_{\{\bullet \leq n+1\}} : \N_0 \to \C$ given by
	\[
		k \mapsto \begin{cases}
			m_n(k-1) & \text{if } k \neq 0, \\
			1 - \sum_{X \subset \Lambda : X \neq \varnothing} m_n(|X|-1) z^X \kappa(X) & \text{if } k = 0,
		\end{cases}
	\]
	the latter case ensuring consistency with the prescription
	\[
		\rho_{n+1}(\varnothing) = (K_\Lambda \rho_n)(\varnothing) = \rho_n(\varnothing) = 1.
	\]
	Hence, the lemma follows by induction.
\end{proof}

\subsection{Monotone domination ansatz}

It follows from Lemma~\ref{lem:picard-existence-independence-finite} that, if $\rho_n$ converges pointwise to some function $\rho : \mathbf F \to \C$ as $n \to \infty$, then $\rho$ is not only a fixed point of $K_\Lambda$ but a solution to the full set of KS equations with $\rho(\varnothing) = 1$ and so, by Proposition~\ref{prop:KS-uniqueness}, $Z(\Lambda) \neq 0$ and $\rho = R(\bullet, \Lambda)$. We address this question of convergence via the general ansatz in \cite{ref:jansen-kolesnikov}.

For a non-negative function $\tilde \rho : \mathbf F \to \R_+$, we define $\tilde K_\Lambda \tilde \rho : \mathbf F \to \R_+$ by setting
\[
	(\tilde K_\Lambda \tilde \rho)(X) : = \begin{cases}
		\tilde \rho(\varnothing) & \text{if } X = \varnothing, \\
		\sum_{N \subset \Lambda \setminus X_s'} |z(s_X)| |\gamma(s_X, N \mid X_s')| \tilde \rho(X_s' \cup N) & \text{if } X \neq \varnothing,
	\end{cases}
\]
for all $X \Subset \X$. The following is our adaptation of \cite[Theorem~2.1]{ref:jansen-kolesnikov}.

\begin{prop} \label{prop:jansen-kolesnikov-domination}
	Suppose that there exists a finite non-negative function $\xi : \mathbf F \to \R_+$ with $\tilde K_\Lambda \xi \leq \xi$ and $\xi(\varnothing) = 1$. Then the function
	\[
		\rho : \mathbf F \to \C, \quad X \mapsto \lim_{n \to \infty} \rho_n(X),
	\]
	is well-defined and one has $\rho = K_\Lambda \rho$ with $\rho(\varnothing) = 1$ and $|\rho| \leq \xi$.
\end{prop}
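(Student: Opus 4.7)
The idea is to run a parallel non-negative Picard iteration and use it to dominate both the values and the successive increments of $(\rho_n)$. I would set $\tilde\rho_0 := \1_{\{\varnothing\}}$ and $\tilde\rho_{n+1} := \tilde K_\Lambda \tilde\rho_n$ for $n \in \N_0$. Since the coefficients $|z(s_X)|\,|\gamma(s_X, N \mid X_s')|$ are non-negative, $\tilde K_\Lambda$ is monotone on non-negative functions, a fact used throughout.

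First, I would show by induction that $(\tilde\rho_n)$ is pointwise non-decreasing in $n$ and pointwise bounded above by $\xi$. The base case uses $\xi(\varnothing) = 1$ and $\xi \geq 0 = \tilde\rho_0|_{\mathbf F \setminus \{\varnothing\}}$; the inductive step uses monotonicity of $\tilde K_\Lambda$ together with the hypothesis $\tilde K_\Lambda \xi \leq \xi$. Since $\xi$ is finite, the monotone bounded sequence $(\tilde\rho_n(X))_{n \in \N_0}$ converges to some $\tilde\rho_\infty(X) \in [0, \xi(X)]$ for each $X \Subset \X$.

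The crux is then a coupled induction yielding the twin bounds
\[
	|\rho_n| \leq \tilde\rho_n \quad \text{and} \quad |\rho_{n+1} - \rho_n| \leq \tilde\rho_{n+1} - \tilde\rho_n.
\]
At $\varnothing$ both are trivial (the sequences $\rho_n(\varnothing)$ and $\tilde\rho_n(\varnothing)$ are constantly $1$ by Lemma~\ref{lem:picard-existence-independence-finite}, so both sides vanish in the second inequality). For non-empty $X$, the first bound follows from the triangle inequality applied inside the linear sum defining $K_\Lambda$ and the monotonicity of $\tilde K_\Lambda$. The second bound follows analogously from the linearity of $K_\Lambda$ on the increments $\rho_{n+1} - \rho_n$, combined with the inductively assumed increment bound and the explicit computation $\tilde\rho_{n+1} - \tilde\rho_n = \tilde K_\Lambda(\tilde\rho_n - \tilde\rho_{n-1})$ on non-empty $X$. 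This coupled step is the real structural heart of the argument.

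Telescoping the increment bound then gives, for each $X \Subset \X$,
\[
	\sum_{n=0}^\infty |\rho_{n+1}(X) - \rho_n(X)| \leq \tilde\rho_\infty(X) - \tilde\rho_0(X) \leq \xi(X) < \infty,
\]
so $(\rho_n(X))_{n \in \N_0}$ is Cauchy and $\rho(X) := \lim_n \rho_n(X)$ exists with $|\rho| \leq \tilde\rho_\infty \leq \xi$ and $\rho(\varnothing) = 1$. The fixed point identity $\rho = K_\Lambda \rho$ follows by letting $n \to \infty$ in $\rho_{n+1} = K_\Lambda \rho_n$: at $\varnothing$ both sides equal $1$, and for non-empty $X$ the sum defining $(K_\Lambda \rho_n)(X)$ runs over the finite index set $\{N \subset \Lambda \setminus X_s'\}$, so termwise passage to the limit is legitimate. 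The only genuine obstacle is the coupled induction producing the twin bounds; everything else is soft monotone-convergence reasoning enabled by the finiteness of $\Lambda$ and of $\xi$.
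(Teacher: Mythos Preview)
Your proof is correct and follows essentially the same monotone-domination approach as the paper: bound the increments of $(\rho_n)$ by a non-negative sequence controlled via $\tilde K_\Lambda \xi \leq \xi$, telescope, and pass to the limit using the finiteness of the sums defining $K_\Lambda$. The only cosmetic difference is that the paper works directly with the partial sums $\tilde\rho_n = \rho_0 + \sum_{k=1}^n |\rho_k - \rho_{k-1}|$ and shows $\tilde\rho_{n+1} \leq \tilde K_\Lambda \tilde\rho_n \leq \xi$ in one inductive pass, whereas you run an independent Picard iteration $\tilde\rho_{n+1} = \tilde K_\Lambda \tilde\rho_n$ and then couple it to $(\rho_n)$ via your twin bounds; the two $\tilde\rho_n$ sequences are related by the inequality your increment bound yields, and the underlying idea is the same.
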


\begin{proof}
	If $\chi : \mathbf F \to \C$ satisfies $|\chi| \leq \xi$ pointwise, then we also obtain the bounds $|(K_\Lambda \chi)(\varnothing)| = |\chi(\varnothing)| \leq \xi(\varnothing)$ and
	\begin{align*}
		|(K_\Lambda \chi)(X)| & \leq \sum_{N \subset \Lambda \setminus X_s'} |z(s_X) \gamma(s_X, N \mid X_s') \chi(X_s' \cup N)| \\
		& \leq \sum_{N \subset \Lambda \setminus X_s'} |z(s_X)| |\gamma(s_X, N \mid X_s')| \xi(X_s' \cup N) \\
		& = (\tilde K_\Lambda \xi)(X) \\
		& \leq \xi(X)
	\end{align*}
	for all non-empty $X \Subset \X$. Since $|\rho_0| = \1_{\{\varnothing\}} \leq \xi$, it follows by induction that $|\rho_n| \leq \xi$ for all $n \in \N_0$.
	
	In fact, let us consider, for given $n \in \N_0$, the function
	\[
		\tilde \rho_n : = \rho_0 + \sum_{k=1}^n |\rho_k - \rho_{k-1}| : \mathbf F \to \R_+.
	\]
	Obviously, $\tilde \rho_n(\varnothing) = 1$ and $|\rho_n| \leq \tilde \rho_n$. If we inductively assume $\tilde \rho_n \leq \xi$, which is still trivially the case for $n = 0$, then
	\[
		\tilde \rho_{n+1} = |K_\Lambda \rho_0| + \sum_{k=1}^n |K_\Lambda(\rho_k - \rho_{k-1})| \leq \tilde K_\Lambda \tilde \rho_n \leq \tilde K_\Lambda \xi \leq \xi.
	\]
	In particular, the finiteness of $\xi$ implies the pointwise existence of
	\[
		\rho = \lim_{n \to \infty} \rho_n = \rho_0 + \sum_{k=1}^\infty (\rho_k - \rho_{k-1}) : \mathbf F \to \C
	\]
	with $\rho(\varnothing) = 1$ and $|\rho| \leq \xi$. Clearly, the finite sums defining $K_\Lambda$ yield
	\[
		K_\Lambda \rho = \lim_{n \to \infty} K_\Lambda \rho_n = \lim_{n \to \infty} \rho_{n+1} = \rho.
	\]
	This concludes the proof.
\end{proof}

As illustrated in \cite{ref:jansen-kolesnikov}, several sufficient conditions for the convergence of the Picard iteration can be derived by adequately choosing the \emph{ansatz function} $\xi$ in Proposition~\ref{prop:jansen-kolesnikov-domination}. For the proof of our Theorem~\ref{thm:dobrushin-Z-non-zero-finite}, we choose
\be \label{eq:xi-definition}
	\xi : \mathbf F \to \R_+, \quad X \mapsto (\alpha \1_\Lambda + r \1_{\X \setminus \Lambda})^X = \prod_{x \in X \cap \Lambda} \alpha(x) \prod_{x \in X \setminus \Lambda} r(x),
\ee
where $\alpha : \X \to \R_+$ and $r = \frac{\alpha}{1+\alpha} : \X \to [0, 1)$ are the same as before. Note that this $\xi$ depends on $\Lambda$ but not on the choice of the selector $s$, even though this is technically allowed.

\subsection{Bounding the kernel}

In order to derive our criterion from Proposition~\ref{prop:jansen-kolesnikov-domination}, we now need to bound $\tilde K_\Lambda \xi$. While our chosen definition of $\gamma$ immediately yields the KS equations, its well-known alternative characterisation is more useful when it comes to deriving bounds. Let $x \in \X$, $B \Subset \X$, and observe first that
\begin{align}
	\gamma(x, N \mid B) & = \sum_{M \subset N \setminus \{x\}} (-1)^{|N \setminus M|} \kappa(x \mid B \cup M) \notag \\
	& = \1_{\X \setminus B}(x) (-1)^{\1_N(x)} \gamma(x, N \setminus \{x\} \mid B) \label{eq:gamma-(-1)-gamma}
\end{align}
for all $N \Subset \X$ by the properties of our conditional Boltzmann factor. Regarding the relevant remaining cases, observe that, for all $M \Subset \X \setminus (\{x\} \cup B)$,
\begin{align*}
	\kappa(x \mid B \cup M) & = \prod_{L \subset M} (1 + W(\{x\} \cup L \mid B) - 1 ) \\
	& = \sum_{\mathcal L \subset \{L \subset M\}} \prod_{L \in \mathcal L} ( W(\{x\} \cup L \mid B) - 1 ).
\end{align*}
Grouping the indexing sets $\mathcal L$ by their unions $\underline{\mathcal L} : = \bigcup_{L \in \mathcal L} L$, the relation \eqref{eq:kappa-sum-gamma}, by M{\"o}bius inversion, necessitates that
\be \label{eq:gamma-cover-sum}
	\gamma(x, N \mid B) = \sum_{\mathcal C \subset \{L \subset N\} : \underline{\mathcal C} = N} \prod_{L \in \mathcal C} ( W(\{x\} \cup L \mid B) - 1 )
\ee
for all $N \Subset \X \setminus (\{x\} \cup B)$. Since the empty set cannot contribute to unions with other sets, we can also write \eqref{eq:gamma-cover-sum} as
\be \label{eq:gamma-kappa-gammahat}
	\gamma(x, N \mid B) = \kappa(s \mid B) \widehat \gamma(x, N \mid B)
\ee
with $\kappa(x \mid B) = W(x \mid B)$ as usual and
\[
	\widehat \gamma(x, N \mid B) = \sum_{\substack{\mathcal C \subset \{L \subset N \mid L \neq \varnothing\} : \\ \underline{\mathcal C} = N}} \prod_{L \in \mathcal C} ( W(\{x\} \cup L \mid B) - 1 ),
\]
cf.\ \cite[Eq.~(12)]{ref:gallavotti-miracle-sole} or \cite[Eq.~(4.2.62)]{ref:ruelle}.

\begin{remark}
	Let $X \in \mathbf F(x)$. Consider \eqref{eq:gamma-cover-sum} with $B = \varnothing$ and $W_X$ instead of $W$, see Subsection~\ref{subsec:step-1-interpolation}. Then the only $L \Subset \X \setminus \{x\}$ with
	\[
		W_X(\{x\} \cup L \mid B) = W_X(\{x\} \cup L) \neq 1
	\]
	is given by $L = X' = X \setminus \{x\}$. Hence, Lemma~\ref{lem:step-2-removal} is, as remarked there, just an application of the KS equation.
\end{remark}

As in the latter remark, recall that $\mathbf F(x) = \{X \Subset \X \mid x \in X\}$ for every $x \in \X$. The proof of the upcoming lemma is conceptually quite similar to that of Lemma~\ref{lem:criterion-conditional-stability}, the main difference being that simply sorting non-empty collections of subsets by their minimal subsets has to be replaced with a partition scheme for subset collections covering a given target set.

\begin{lemma} \label{lem:gamma-alpha-bound}
	Let $x \in \X$, $N \Subset \X \setminus \{x\}$. Then $|\gamma(x, N)| \alpha^N$ is at most
	\begin{align*}
		\sum_{\substack{\mathcal X \subset \mathbf F(x) : \\ \underline{\mathcal X} \setminus \{x\} = N}} \prod_{X \in \mathcal X} \left( \max\{|W(X)| , 1 + |W(X) - 1| \alpha^S \mid \varnothing \neq S \subset X \setminus \{x\}\} - 1 \right).
	\end{align*}
\end{lemma}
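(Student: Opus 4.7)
\noindent The plan is to set up a partition scheme on the cover-sum expression for $\gamma$ in \eqref{eq:gamma-cover-sum} and then match both sides of the inequality term-by-term after an analogous decomposition. First, factor $\gamma(x,N)=W(x)\widehat\gamma(x,N)$ via \eqref{eq:gamma-kappa-gammahat}. Writing $\max_X$ for the maximum on the right-hand side of the lemma and splitting the RHS sum according to whether $\{x\}\in\mathcal X$, the identity $1+(|W(x)|-1)=|W(x)|$ yields
\[
	\sum_{\mathcal X:\underline{\mathcal X}\setminus\{x\}=N}\prod_{X\in\mathcal X}(\max_X-1)=|W(x)|\sum_{\mathcal X'}\prod_{X\in\mathcal X'}(\max_X-1),
\]
where $\mathcal X'$ ranges over subsets of $\mathbf F(x)\setminus\{\{x\}\}$ with $\underline{\mathcal X'}\setminus\{x\}=N$. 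The bijection $L\leftrightarrow X=\{x\}\cup L$ identifies this $\mathcal X'$-sum with a sum over covers of $N$ by non-empty subsets, the same family of covers that indexes the expansion of $\widehat\gamma$.

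\noindent The heart of the argument is a partition scheme for such covers. Fix any total order $\preceq$ on $\{L\subset N:L\neq\varnothing\}$. Given a cover $\mathcal C$ of $N$, set $L(y):=\max_\preceq\{L\in\mathcal C:y\in L\}$ for each $y\in N$, define the \emph{skyline subcover} $\mathcal C^\circ:=\{L(y):y\in N\}$, and put $T_L:=\{y\in N:L(y)=L\}$ for $L\in\mathcal C^\circ$, yielding a partition of $N$ into non-empty subsets $T_L\subset L$. The \emph{extras} are $\mathcal E(\mathcal C^\circ):=\{L\subset N\mid L\neq\varnothing,\,L\notin\mathcal C^\circ,\,\forall y\in L:\,L\prec L(y)\}$, i.e., the subsets whose addition to $\mathcal C^\circ$ leaves the greedy procedure's output unchanged. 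The key combinatorial claim is that $\mathcal C\mapsto(\mathcal C^\circ,\mathcal C\setminus\mathcal C^\circ)$ is a bijection onto pairs $(\mathcal C^\circ,\mathcal R)$ with $\mathcal C^\circ$ its own skyline and $\mathcal R\subset\mathcal E(\mathcal C^\circ)$. Summing over $\mathcal R$ via the subset identity $\sum_{\mathcal R\subset\mathcal E(\mathcal C^\circ)}\prod_{L\in\mathcal R}(W(\{x\}\cup L)-1)=\prod_{L\in\mathcal E(\mathcal C^\circ)}W(\{x\}\cup L)$ then gives
\[
	\widehat\gamma(x,N)=\sum_{\mathcal C^\circ}\prod_{L\in\mathcal C^\circ}(W(\{x\}\cup L)-1)\prod_{L\in\mathcal E(\mathcal C^\circ)}W(\{x\}\cup L).
\]

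\noindent The bound then reduces to a term-by-term comparison. Distributing $\alpha^N=\prod_{L\in\mathcal C^\circ}\alpha^{T_L}$ and taking absolute values, each factor is controlled by the defining maximum: for $L\in\mathcal C^\circ$, the choice $S=T_L$ (non-empty, contained in $(\{x\}\cup L)\setminus\{x\}=L$) gives $|W(\{x\}\cup L)-1|\alpha^{T_L}\leq\max_{\{x\}\cup L}-1$, while for $L\in\mathcal E(\mathcal C^\circ)$ one simply has $|W(\{x\}\cup L)|\leq\max_{\{x\}\cup L}$. The same partition scheme applied to the $\mathcal X'$-sum (via the bijection $L\leftrightarrow\{x\}\cup L$) produces a structurally identical decomposition in terms of $\max_{\{x\}\cup L}-1$ and $\max_{\{x\}\cup L}$, so the termwise bounds combine and multiplication by $|W(x)|$ delivers the claimed inequality. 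The main obstacle is establishing the bijective partition scheme — specifically, that adding any $L\in\mathcal E(\mathcal C^\circ)$ to $\mathcal C^\circ$ leaves the skyline unchanged and that every cover's skyline is itself a skyline — after which the surrounding analytic steps are essentially immediate.
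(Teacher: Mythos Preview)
Your proposal is correct and follows essentially the same approach as the paper: both proofs hinge on the same partition scheme for covers (your ``skyline'' subcover $\mathcal C^\circ$ via $\max_\preceq$ is identical, up to reversing the order, to the paper's minimal subcover $\mathcal C^*$ via $\min_\preceq$), and both then bound the $(W-1)$-factors by $m(L)-1$ on the subcover and the $W$-factors by $m(L)$ on the extras. The only cosmetic differences are that you factor out $|W(x)|$ at the outset (working with $\widehat\gamma$) while the paper keeps $\varnothing$ in the power set throughout, and that you decompose both sides and compare termwise whereas the paper bounds and then reverses the partition scheme---logically equivalent presentations of the same argument.
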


\begin{proof}
	Let $\preceq$ denote an arbitrary but fixed total order on the finite power set $\{L \subset N\}$ of $N$ and define, for every $\mathcal C \subset \{L \subset N\}$,
	\[
		\mathcal C^* : = \{L \in \mathcal C \mid L \not\subset \underline{\{L' \in \mathcal C \mid L' \prec L\}}\} = \{{\min}_\preceq\{L \in \mathcal C \mid n \in L\} \mid n \in \underline{\mathcal C}\}.
	\]
	Observe that, for every $\mathcal C \subset \{L \subset N\}$, we have $\underline{\mathcal C^*} = \underline{\mathcal C}$,
	\[
		\mathcal C^* \cap \{L \subset N \mid L \subset \underline{\{L' \in \mathcal C^* \mid L' \prec L\}}\} = \varnothing,
	\]
	which implies $\varnothing \notin \mathcal C^*$ in particular, and
	\begin{align*}
		& \{\mathcal C' \subset \{L \subset N\} \mid (\mathcal C')^* = \mathcal C^*\} \\
		& \quad = \{\mathcal C^* \cup \mathcal C'' \mid \mathcal C'' \subset \{L \subset N \mid L \subset \underline{\{L' \in \mathcal C^* \mid L' \prec L\}}\}\}.
	\end{align*}
	Sorting the indexing covers $\mathcal C$ of $N$ in the sum on the right-hand side of \eqref{eq:gamma-cover-sum} according to their ``minimal'' subcovers $\mathcal C^*$, we can derive
	\begin{align*}
		\gamma(x, N) \alpha^N & = \sum_{\substack{\mathcal C \subset \{L \cup N\} : \\ \underline{\mathcal C} = N, \mathcal C^* = \mathcal C}} \prod_{L \in \mathcal C} (W(\{x\} \cup L) - 1) \alpha^{L \setminus \underline{\{L' \in \mathcal C \mid L' \prec L\}}} \\
		& \quad \quad \times \prod_{L \subset N : L \subset \underline{\{L' \in \mathcal C^* \mid L' \prec L\}}} W(\{x\} \cup L).
	\end{align*}
	Noting that, in the above, the set $L \setminus \underline{\{L' \in \mathcal C \mid L' \prec L\}}$ is non-empty for all $L \in \mathcal C = \mathcal C^*$ and that $L = \varnothing$ only occurs in the second product, we obtain the bound
	\begin{align*}
		|\gamma(x, N)| \alpha^N & \leq \sum_{\substack{\mathcal C \subset \{L \cup N\} : \\ \underline{\mathcal C} = N, \mathcal C^* = \mathcal C}} \prod_{L \in \mathcal C} (m(L) - 1) \prod_{L \subset N : L \subset \underline{\{L' \in \mathcal C^* \mid L' \prec L\}}} m(L) \\
		& = \sum_{\substack{\mathcal C \subset \{L \cup N\} : \\ \underline{\mathcal C} = N}} \prod_{L \in \mathcal C} (m(L) - 1),
	\end{align*}
	where we reversed the partitioning of the indexing covers after setting
	\[
		m(L) = \max\{|W(\{x\} \cup L)|, 1 + |W(\{x\} \cup L) - 1| \alpha^S \mid \varnothing \neq S \subset L\}
	\]
	for every $L \subset N$. Mapping each $\mathcal C \subset \{L \cup N\}$ satisfying $\underline{\mathcal C} = N$ to the set
	\[
		\{\{x\} \cup L \mid L \in \mathcal C\} \in \{\mathcal X \subset \mathbf F(x) \mid \underline{\mathcal X} \setminus \{x\} = N\}
	\]
	completes the proof.
\end{proof}

Summing over different choices of $N$ in Lemma~\ref{lem:gamma-alpha-bound}, after possibly employing \eqref{eq:gamma-(-1)-gamma}, then easily yields the following.

\begin{cor} \label{cor:sum-gamma-alpha-bound}
	Let $x \in \X$. Then
	\[
		\sum_{N \subset \Lambda} |\gamma(x, N)| \alpha^N = (1 + \alpha(x) \1_\Lambda(x)) \sum_{N \subset \Lambda \setminus \{x\}} |\gamma(x, N)| \alpha^N
	\]
	with 
	\begin{align*}
		& \sum_{N \subset \Lambda \setminus \{x\}} |\gamma(x, N)| \alpha^N \\
		& \quad \leq \prod_{\substack{X \in \mathbf F(x) : \\ X \setminus \{x\} \subset \Lambda}} \max\{|W(X)| , 1 + |W(X) - 1| \alpha^S \mid \varnothing \neq S \subset X \setminus \{x\}\}.
	\end{align*}
\end{cor}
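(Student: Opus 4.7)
The plan is to treat the two assertions separately: the equation handles the possibility $x\in\Lambda$ by peeling off the singleton contribution, and the inequality is essentially Lemma~\ref{lem:gamma-alpha-bound} plus a reversal of Mayer's trick.

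For the equation, I would apply \eqref{eq:gamma-(-1)-gamma} with $B=\varnothing$, which yields $|\gamma(x,N)|=|\gamma(x,N\setminus\{x\})|$ for every $N\Subset\X$. Splitting the sum $\sum_{N\subset\Lambda}$ according to whether $x\in N$ or not, the former case only occurs if $x\in\Lambda$ and then contributes an extra factor $\alpha(x)$ on top of the sum over $N\subset\Lambda\setminus\{x\}$. This gives the stated factorisation $1+\alpha(x)\1_\Lambda(x)$.

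For the inequality, I would begin by applying Lemma~\ref{lem:gamma-alpha-bound} inside the sum, obtaining
\[
\sum_{N\subset\Lambda\setminus\{x\}}|\gamma(x,N)|\alpha^N\le\sum_{N\subset\Lambda\setminus\{x\}}\;\sum_{\substack{\mathcal X\subset\mathbf F(x):\\ \underline{\mathcal X}\setminus\{x\}=N}}\prod_{X\in\mathcal X}(m(X)-1),
\]
with $m(X):=\max\{|W(X)|,1+|W(X)-1|\alpha^S\mid\varnothing\ne S\subset X\setminus\{x\}\}$. Exchanging the order of summation, the combined index set becomes exactly $\{\mathcal X\subset\mathbf F(x)\mid X\setminus\{x\}\subset\Lambda\text{ for all }X\in\mathcal X\}$, since every $\mathcal X$ with that property has a well-defined union $N=\underline{\mathcal X}\setminus\{x\}\subset\Lambda\setminus\{x\}$. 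Reversing Mayer's trick then collapses the sum into a product:
\[
\sum_{\mathcal X\subset\{X\in\mathbf F(x):X\setminus\{x\}\subset\Lambda\}}\prod_{X\in\mathcal X}(m(X)-1)=\prod_{\substack{X\in\mathbf F(x):\\ X\setminus\{x\}\subset\Lambda}}m(X),
\]
which is exactly the asserted bound.

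The only potentially delicate point is verifying that the index swap is clean, i.e.\ that every $\mathcal X\subset\mathbf F(x)$ with $X\setminus\{x\}\subset\Lambda$ for all $X\in\mathcal X$ is counted exactly once; this is immediate because $N$ is uniquely determined by $\mathcal X$. Everything else is routine bookkeeping, and the combinatorial substance is already contained in the partition scheme used to prove Lemma~\ref{lem:gamma-alpha-bound}. I therefore do not anticipate any real obstacle here.
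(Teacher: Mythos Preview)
Your proposal is correct and matches the paper's own argument, which is stated there in a single sentence: sum Lemma~\ref{lem:gamma-alpha-bound} over $N\subset\Lambda\setminus\{x\}$ and use \eqref{eq:gamma-(-1)-gamma} to handle the factor $(1+\alpha(x)\1_\Lambda(x))$. Your explicit write-up of the index swap and the Mayer-trick reversal is exactly the intended ``routine bookkeeping''; note in particular that the set $\{X\in\mathbf F(x):X\setminus\{x\}\subset\Lambda\}$ is finite, so both the sum over $\mathcal X$ and the resulting product are unproblematic, and the identity $\sum_{\mathcal X}\prod_{X\in\mathcal X}(m(X)-1)=\prod_X m(X)$ is purely algebraic and needs no sign condition on $m(X)-1$.
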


Using $\widehat \gamma$ instead of $\gamma$, see \eqref{eq:gamma-kappa-gammahat}, the choice $\alpha = 1$ and the simple bound $1 + |\e^{-\zeta} - 1| \leq \e^{|\zeta|}$, $\zeta \in \C$, make Corollary~\ref{cor:sum-gamma-alpha-bound} imply Gallavotti and Miracle-Sol{\'e}'s \cite[Prop.~1]{ref:gallavotti-miracle-sole} whose derivation is also found on \cite[page~81]{ref:ruelle}.

\subsection{Alternative proof of Theorem~\ref{thm:dobrushin-Z-non-zero-finite}}

Let us now assume the hypothesis of our main result, i.e., that
\be \label{eq:dobrushin-criterion-proof-ks-epsilon}
	|z(x)| \prod_{X \in \mathbf F(x)} \max\{|W(X)|, 1 + |W(X) - 1| \alpha^S \mid \varnothing \neq S \subset X \setminus \{x\}\} \leq r(x).
\ee
for all $x \in \X$. Given our fixed choice of $\Lambda \Subset \X$ and the accordingly defined ansatz function $\xi$, see \eqref{eq:xi-definition}, we claim that $(\tilde K_\Lambda \xi)(X) \leq \xi(X)$ for every non-empty $X \Subset \X$, regardless of the chosen selector $s$. Indeed, the multiplicative form of $\xi$ in terms of $\alpha$ and $r = \frac{\alpha}{1+\alpha}$ makes it sufficient to check
\[
	|z(x)| \sum_{N \subset \Lambda \setminus X'} |\gamma(x, N \mid X')| \alpha^N \leq r(x) (1 + \alpha(x) \1_\Lambda(x))
\]
for every $x \in \X$ and $X' \Subset \X \setminus \{x\}$, which follows by Corollary~\ref{cor:sum-gamma-alpha-bound} and \eqref{eq:dobrushin-criterion-proof-ks-epsilon}. Lemma~\ref{lem:criterion-conditional-stability}/Corollary~\ref{cor:criterion-conditional-stability} takes care of the boundary condition $X'$.

Proposition~\ref{prop:jansen-kolesnikov-domination} provides the Picard iteration with a well-defined pointwise limit $\rho = \lim_{n \to \infty} \rho_n : \mathbf F \to \C$. It satisfies $\rho(\varnothing) = 1$, $|\rho| \leq \xi$ and
\[
	K_\Lambda \rho = \lim_{n \to \infty} K_\Lambda \rho_n = \lim_{n \to \infty} \rho_{n+1} = \rho.
\]
As a consequence of Lemma~\ref{lem:picard-existence-independence-finite}, neither $\rho$ nor the validity of the latter identities depend on the choice of the selector $s$. In other words, $\rho$ also satisfies the hypothesis of Proposition~\ref{prop:KS-uniqueness}, thereby yielding $Z(\Lambda) \neq 0$ and, in particular,
\[
	|\widehat z(x, \Lambda)| = |R(x, \Lambda)| = |\rho(x)| \leq \xi(x) = r(x) < 1
\]
for every $x \in \X \setminus \Lambda$. By \eqref{eq:fundamental-identity-Z(1+zhat)} and the arbitrariness of $\Lambda \Subset \X$, this inductively implies the bounds
\[
	0 < (1 - r)^\Lambda \leq |Z(\Lambda)| \leq (1 + r)^\Lambda
\]
as in the first proof and we are done.

\subsubsection*{Statement on data availability and no conflict of interest} 
Data sharing is not applicable. We do not analyse or generate any datasets, because our work proceeds within a theoretical and mathematical approach. 

The author has no competing interests to declare that are relevant to the content of this article.

\subsubsection*{Acknowledgement.} This research has partially been funded by the Deutsche Forschungsgemeinschaft (DFG) by grant SPP 2265 ``Random Geometric Systems'', Project P13. Since the expiration of the latter grant, the author has been supported under Germany’s excellence strategy EXC-2111-390814868. The author would like to thank Sabine Jansen and Leonid Kolesnikov for helpful discussions.

\end{document}